\definecolor{grey}{rgb}{.25,.0,.25}
\definecolor{darkgreen}{rgb}{0, .5, 0}
\definecolor{darkred}{rgb}{.5, 0, 0}
\newcommand{\D}{\mathbf{D}}
\theoremstyle{plain}
\newtheorem{theorem}{Theorem}[section]
\newtheorem*{mocktheorem*}{Mock Theorem}
\newtheorem{proposition}[theorem]{Proposition}
\newtheorem{corollary}[theorem]{Corollary} 
\newtheorem{assumption}[theorem]{Assumption} 
\newtheorem{lemma}[theorem]{Lemma} 
\newtheorem{example}[theorem]{Example}
\theoremstyle{definition} 
\newcommand{\E}{{\mathbb{E}}}
\newcommand{\dd}{{\rm d}}
\providecommand{\N}{{\mathbb N}}
\newcommand{\1}{\ensuremath{\mathbf{1}}}
\providecommand{\D}{\ensuremath{D(\R_+,\R)}}
\def\P{{\mathbb P}}
\providecommand{\abs}[1]{\ensuremath{\left\lvert#1\right\rvert}}
\providecommand{\poi}{\ensuremath{{\rm Poi}}}
\renewcommand{\d}{{\partial}}
\newcommand{\St}{\mathbb{S}}
\newcommand{\W}{\mathbb{W}}
\begin{document}
\singlespacing
\title{\LARGE\bf Optimal Support for Distressed Subsidiaries -- a Systemic Risk Perspective}
\author{Maxim Bichuch\thanks{Department of Mathematics,
SUNY at Buffalo
Buffalo, NY 14260. 
Work  is partially supported by NSF grant DMS-1736414. Research is partially supported by the Acheson J. Duncan Fund for the Advancement of Research in Statistics.
Email: {\tt mbichuch@buffalo.edu}.} \and Nils Detering\thanks{Department of Statistics and Applied Probability, University of California, Santa Barbara, CA 93106, USA. Email: {\tt detering@pstat.ucsb.edu}.}}

\maketitle
\begin{abstract}
We consider a network of bank holdings, where every holding has two subsidiaries of different types. A subsidiary can trade with another holding's subsidiary of the same type. Holdings support their subsidiaries up to a certain level when they would otherwise fail to honor their financial obligations. 
We investigate the spread of contagion in this banking network when the number of bank holdings is large, and find the final number of defaulted subsidiaries under different rules for the holding support. We also consider resilience of this multilayered network to small shocks. Our work sheds light onto the role that holding structures can play in the amplification of financial stress.  
We find that depending on the capitalization of the network, a holding structure can be beneficial as compared to smaller separated entities. In other instances, it can be harmful and actually increase contagion.
We illustrate our results in a numerical case study and also determine the optimal level of holding support from a regulator perspective. 

\end{abstract}
\medskip
\noindent\textit{Keywords:} systemic risk, financial contagion, holdings, subsidiaries, multilayered networks.

\section{Introduction}
Financial risk management used to be focused on an individual firm, and on a bilateral basis. However, after the financial crisis of 2007-2009, we have gained the understanding that this risk can spread akin to a virus through the entire banking system. And similar to a virus, the risk increases the more nodes succumb to the shock. This phenomenon is known as systemic risk -- the risk that a (small) shock that hits the system spreads throughout the system to a degree that it endangers the entire system. This spread is also referred to as financial contagion. There are two distinct ways for this risk to spread -- through local connections (for example when one bank cannot honor its obligations to another), and through global connection (for example when asset prices are impacted, as a result of liquidations). This paper will focus on the contagion spread through local connections.

One of the first papers on the spread of contagion through local connections is \cite{EN01}. In this paper the authors show existence of a clearing equilibrium payment vector in a network of banks connected by liabilities.  Numerous generalizations and follow up studies include e.g., \cite{ACP14,HK15,BEST04,ELS13,U11,G11,BBCH17}.  Generalizations such as fire sales and bankruptcy costs have been considered in  \cite{E07,RV13,EGJ14,GY14,AW_15,CCY16, E07,EGJ14,AW_15,GHM12,CFS05,NYYA07,arinaminpathy2012size,AFM13,CLY14,AW_15,AFM16,bichuch2019optimization,bichuch2022repo,bichuch2020systemic}).  We also refer to, e.g., \cite{AW_15,Staum,H16} for additional reviews of this literature.

Another stream of literature considers contagion effects in random networks (\cite{ar:gk10,Cont2016,detering2019managing,detering2020financial}, and \cite{deteringDefaultFireSales} for an extension incorporating fire sales). In these papers, asymptotic methods (as the number of banks increases to infinity) are facilitated to determine the final damage to the system after some initial shock hits some banks. From the perspective of random graph theory, the results on contagion in random networks are linked to a process called {\em bootstrap percolation}. This is a process which models the spread of some activity or infection within a graph via its edges. It has been analysed for Erdos-R\'enyi random graphs~\cite{janson2012} and random regular graphs~\cite{MR2283230}, and these results have later been extended to inhomogeneous graph in \cite{Amini2014Threshold, Amini2014, Detering2015a}, and to clustered graphs in \cite{https://doi.org/10.48550/arxiv.2205.14782}. 

Notably, in all these models the banks in the network have always been thought of as an indivisible atom. In other words, the entire bank must be in one state only (e.g. solvent or insolvent). We are not aware of any results where the banks are divisible and one subsidiary of a bank can be in a different solvency state than another subsidiary. In reality this is not the case. First, big international banks are usually big holding companies, with dozens if not hundreds of subsidiaries, that are divided by business lines and locations (e.g. JPMorgan Chase\footnote{\url{https://www.sec.gov/Archives/edgar/data/19617/000119312508043536/dex211.htm}} and AIG\footnote{\url{https://www.sec.gov/Archives/edgar/data/5272/000110465920023889/exhibit21.htm}}). Second, a financial loss---possibly amplified through contagion---does not necessarily affect the entire bank (at least not immediately), but often starts with one (or more) of its subsidiaries (see e.g. the classical example of AIG in \cite{Lewis2010big}). Contagion can then spread both between subsidiaries of different banks (e.g. through local connections), and between different subsidiaries of the same bank (e.g. through the holding). 

For corporations, becoming a holding with subsidiaries has several advantages: 
It allows them to defer taxable business income and use earnings for other business opportunities or, by channeling income to low tax countries, reduce taxes altogether. In other instances, it allows them to limit the spillover risk if one business line is in trouble. In this case, the holding acts as a pure shareholder and enjoys limited liability. Often bank holdings naturally divide into subsidiaries based on location (i.e. Europe, US) and/or business lines (i.e. broker-dealers, commercial bank, insurance, asset management). The precise judicial setup differs across different countries but two main types of subsidiaries are prevalent:
\begin{itemize}
\item {\em Type A:} The holding acts as an asset holder with full liability for the business activity of its subsidiary. This setup is usually accompanied by a profit transfer agreement ensuring fiscal unity. It is common in Europe; and is becoming more common in the US\footnote{\url{https://www.fdic.gov/regulations/reform/resplans/plans/boa-165-2107.pdf}}\footnote{\url{https://www.federalreserve.gov/econres/notes/feds-notes/foreign-banks-asset-reallocation-intermediate-holding-company-rule-of-2016-20210512.htm}}. For the holding, this structure has the advantage that profits from the subsidiary are often collected on a pretax basis and can be netted with losses from other subsidiaries or the holding. For creditors of the subsidiary, an obvious advantage is that holdings support their subsidiaries, and cover some of their losses. 
\item {\em Type B:} The holding acts as a shareholder of its subsidiaries without explicit liability guarantee. This is the standard holding structure in the US\footnote{\url{https://www.wsj.com/articles/paul-kupiec-and-peter-wallison-the-fdics-bank-holding-company-heist-1419292997}}. From a risk perspective this holding structure is  beneficial for the holdings, as the insolvency of a subsidiary does not directly impact other business lines of the same holding. It might promote greater risk taking, because losses not covered by equity are immediately borne by other market participants, and not by the holding first.
\end{itemize}
While for the corporations themselves, the advantages are imminent, whether holdings are beneficial from a society perspective is less clear. In particular it is not well understood how shocks propagate in a network of bank holdings, and what influence the holding structure has on the contagion process. In this paper we address this important research question and try to understand the influence of the holding structures on the propagation of shocks through the financial system. We are particularly interested in the effect that support of the holding for its distressed subsidiary has on contagion. Holdings might support their subsidiaries beyond their financial liability for several reasons. For example they might fear the reputation cost of an insolvent subsidiary, or the threat of a possible change in credit rating if a subsidiary defaults. In other cases, the holding might support its subsidiary because it still believes in its business model, and considers the solvency problems to be of temporary nature. 

We consider two holding types which mimic the situation $A$ and $B$ described above.  Depending on the holding type A or B, the possible financial support of the holding for one subsidiary has different implications for the holding's other subsidiary. For type $A$ it weakens the other subsidiary, while for type $B$ it does not. We then consider a contagion process that is triggered by a financial shock and propagates throughout the network according to rules determined by the holding type. 

We investigate this contagion mechanism in a banking network where the number of banks is large. Under both holding types $A$ and $B$, we find the final number of defaulted subsidiaries at the end of the cascade. We further investigate resilience of this multilayered network to small shocks. 
We find that for well capitalized systems the holding structure can be beneficial as compared to smaller separated entities. This is because the holding is able to support a distressed subsidiary with capital from a subsidiary that is better off, therefore reducing the probability of default of the distressed subsidiary. However, if this support goes too far, it can actually amplify the spread of contagion, because healthy parts of the network get weakened and trigger new rounds of feedback effects. In a numerical case study we determine the optimal support level from a regulator perspective. We do not consider additional avenues through which the systemic risk can be mitigated in a financial network, such as intervention by a central bank, raising capital on the market or a takeover by another bank. Instead, we leave that for future research.

We pursue our analysis in a random network setup that raises also some interesting new theoretical challenges. From a technical perspective our model setup provides a first instance of a multilayered random network in which contagion spreads through different channels and where interactions between layers is through the nodes. This complicates the analysis compared to a one layer network, because the state of each node now depends on its local connectivity in each layer. 

Our analysis includes a multi-layer analog to the classical {\em bootstrap percolation} process mentioned above.

The paper is structured as follows: In Section~\ref{model} we state our model and our main result regarding the final number of defaulted subsidiaries of different types. In Section~\ref{resillience} we derive results that allow us to classify networks according to their resilience to small shocks. We give an example of a system in which the default of one subsidiary immediately triggers the default of the holding's other subsidiary. Still this system is more resilient than its fully separated counterpart due to the fact that the default of both subsidiaries can be slightly deferred with some holding support.  Section~\ref{casestudy} contains our numerical case study in which we also determine the optimal holding support. We summarize our results in Section \ref{conclusion}. All proofs are in Section~\ref{sec:proofs}.

\section{Model and main result}\label{model}

We consider a network of $n$ banks and label each bank by an index $i \in [n]:= \{ 1,\dots ,n \}$. Each bank is structured as a holding with two subsidiaries of different types $1$ and $2$. We denote the type $1$ subsidiary of bank $i$ by $i_1$ and the  type $2$ subsidiary of bank $i$ by $i_2$. We denote the set of type $1$ subsidiaries by $[n_1]:= \{ 1_1,\dots ,n_1 \}$ and the set of type $2$ subsidiaries by $[n_2]:= \{ 1_2,\dots ,n_2 \}$.  
We assume that subsidiaries only trade with subsidiaries of the same type, i.e. subsidiary of type 1 of bank $i$ might trade with subsidiary of type $1$ of another bank $j$ but not with its subsidiary of type $2$. For simplicity, in what follows we will refer to subsidiary of type $l$, as simply subsidiary $l$.
The assumption then implies that there is no cross trading between subsidiaries of different types, and linkage is only through the holdings. 
We consider a network that describes these trading activities. In this toy model we assume that every loan is of equal unit size $1$. For this let $e^l_{i,j} \in \{0,1 \} ,i,j \in [n], i\neq j, l\in \{ 1,2 \}$ denote the exposure of subsidiary $l$ of bank $i$ towards subsidiary $l$ of bank $j$. We then build a network by drawing a directed link from subsidiary $l$ of bank $i$ to subsidiary $l$ of bank $j$ if $e^l_{i,j}=1$. We do not allow for self loops and multiple edges. 
For simplicity, we assume that the recovery rate is zero, so that if $e^l_{i,j}=1$ and subsidiary $l$ of bank $j$ defaults, then subsidiary $l$ of bank $i$ will incur a loss of $1$ on its loan. Therefore, we are looking at short- and medium-term effects of defaults resulting in illiquidity driven defaults. Ultimately, the loan holder might recover some money from the loan, but that is likely to take some time.

We denote by $c_{i,1}\in \mathbb{Z}$ and $c_{i,2}\in \mathbb{Z}$ the capital of subsidiary $1$ and $2$ of bank $i$. The total capital of the bank is then equal to $c_i:=c_{i,1}+c_{i,2}$. This is the initial capital structure of each bank holding. 
As mentioned in the introduction already, in this paper we consider two types of holding structure. Type A that mimics the situation where the holding acts as an asset holder of the assets of its subsidiaries, and type B where the holding acts as a shareholder of its subsidiaries. We assume in the following that all holdings are of the same type, either all of type $A$ or all of type $B$. Our results could however easily be extended to networks that consist of type $A$ and type $B$ holdings. In what follows, several quantities that are introduced depend on $J\in \{ A, B\}$, but in order to lighten the notation we make this dependency only specific when necessary. 

We assume that some subsidiaries are in default, i.e. the set $\mathcal{D}_0=\mathcal{S}_{1} \cup \mathcal{S}_{2}$ is not empty, where
$$\mathcal{S}_{l} = \{ i_l \in [n_l] \colon \mbox{$i_l$ is in default} \},~l=1,2.$$

 The initial defaults in $\mathcal{D}_0$ now trigger a cascade that then evolves in {\em generations} or {\em rounds}. 
 We denote by $c_{i,1}^{(k)}$ and $c_{i,2}^{(k)}$ the capital of subsidiary $1$ and $2$ of bank $i$ in round $k\ge0$, and set $(c_{i,1}^{(0)},c_{i,1}^{(0)}):= (c_{i,1},c_{i,2})$. In round $k$, the total capital of the bank is then equal to $
c_{i,1}^{(k)}+c_{i,2}^{(k)}$. 

Let $x \in \{ 0, -1,-2,\dots \} \cup \{ -\infty \} $ be a global support level. We assume that all holdings will support their subsidiaries up to the same fixed level $x$, when possible. For type $A$ holdings, the support for a subsidiary in distress has to come directly from the other subsidiary of the same holding. It can be thought of the holding reallocating capital (by moving some assets) from the healthy subsidiary to the other distressed subsidiary. For holding structures of type $B$, the support for a troubled subsidiary does not reduce the capital of the other subsidiary. Instead, this support can be thought of as coming from a borrowing transaction of the holding against the capital of the healthy subsidiary. In any case, the holding supports its subsidiaries up to the level $x$ if possible and gives up on them when faced with substantial losses larger than $x$. For holdings of type $A$, limiting the loss from a distressed subsidiary at $x>-\infty$ is only possible by either selling this subsidiary or by offloading the assets of this subsidiary into a bad bank construction. 

Mathematically, this means that for holding $i$, its subsidiary $i_l,$ $l\in \{1,2\}$ is in default in round $k$ if one of the following conditions hold:
\begin{itemize}
\item The capital of subsidiary $i_l$ is below the support level $x$: $c^{(k)}_{i,l}\leq x$.
\item {\bf For a holding structure of type $A$}: $c_{i,1}^{(k)}+c_{i,2}^{(k)}\leq 0$ and $c_{i,l}^{(k)}\leq -x$. In other words, if the capital of subsidiary $i_l$ is at most $-x$, and the entire holding is in default, then the holding is not able to support the distressed subsidiary. 
Furthermore, the default of the distressed subsidiary weakens the other subsidiary which causes it to default as well, as effectively its entire capital was used to support the distressed subsidiary, which has now defaulted.
This results in the default of both subsidiaries.\\
{\bf For a holding structure of type $B$}: $c_{i,1}^{(k)}+c_{i,2}^{(k)}\leq 0$ and $c_{i,l}^{(k)}\leq 0$ In other words, if the capital of subsidiary $i_l$ is non-positive, and the entire holding is in default, so the holding again is not able to support the distresses subsidiary. Since each subsidiary acts as a share holder, this results in default of the distressed subsidiary, but the stronger subsidiary may still survive.
\end{itemize}

In each round the contagion expands, because a defaulted subsidiary causes a loss of $1$ to every subsidiary it traded with and therefore reduces their capital by $1$. If subsidiary $i_l$ of bank $i$ defaults in round $k$, then $c_{j,l}^{(k+1)}=c_{j,l}^{(k)}-1$ for all $j\in \{ j\in [n] : e^l_{i,j}\neq 0 \}$ unless subsidiary $j_l$ had already defaulted. 
Updating the capitals of all subsidiaries ends the round and the next round then starts. Note that if round $k$ leads to holding defaults, which can happen if $c_{i,1}^{(k)}+c_{i,2}^{(k)}\leq 0$ for some $i$, then the capital of the subsidiary that has already defaulted is not changed, and so is the capital of the other subsidiary that might default even with positive capital.

In round $k$, the set of banks with defaulted type $l$ subsidiary is denoted by 
\begin{equation}\label{process:generations}
\mathcal{S}_{l,k} = \{ i_l \in [n_l] \colon \mbox{subsidiary $i_l$ is in default in round $k$} \},
\end{equation}
and $\mathcal{D}_k = \mathcal{S}_{1,k} \cup \mathcal{S}_{2,k}$ denotes the set of all defaulted subsidiaries at step $k$. This then leads to the two cascades $$\mathcal S_{l,0} \subset  \mathcal{S}_{l,1} \subset \dots,~l=1,2,$$ where both cascade processes are strongly coupled through the holdings. The cascade of all subsidiaries is given by $$\mathcal{D}_{0} \subset  \mathcal{D}_{1} \subset \dots.$$ This cascade stops after at most $2n$ iterations and we denote by $\mathcal{S}_{l,2n-1}$ the final sets of defaulted subsidiaries for $l\in \{ 1,2 \}$. Such a process is exemplified for a small network with $n=3$ and support level $x=-1$ in Figures ~\ref{fig2} and \ref{fig2.1} for holdings of types $A$ and $B$ respectively.

In the following we want to determine $n^{-1} \abs{\mathcal{S}_{l,2n-1}}$ in a random network setting. For this let us first stress that when a holding supports a subsidiary that has a non-positive capital, we formalize this support not by actually increasing the capital of the supported subsidiary in order to lift it above $0$, but by reducing the capital threshold at which it defaults by $1$. Similarly in case of the holding type $A$, the support for one subsidiary does not reduce the other subsidiary's capital but instead increases the capital level at which the other subsidiary would default by $1$. While both descriptions are equivalent and lead to the same defaulted subsidiaries, actually changing the capital to formalize the holding support is less convenient for the mathematical analysis. In fact, our approach of changing the default level has the advantage that in order to determine whether a subsidiary of a holding with capital structure $(c_{i,1}^{(k)},c_{i,2}^{(k)})$ has defaulted, one simply needs to check whether $(c_{i,1}^{(k)},c_{i,2}^{(k)})$ is in some subset of $\mathbb{Z}\times \mathbb{Z}$. This subset is the default region and it differs for types $A$ and $B$. We shall specify the default regions now. 

For simplicity we assume that all capitals are bounded by $R\in \mathbb{N}$, and because a subsidiary defaults always when the capital is less or equal than $x$, we do not need to consider any capitals strictly less than $x$. We therefore define $\mathbf{D}:= (\mathbb{Z} \cap  [x,R])\times (\mathbb{Z} \cap  [x,R])$ as the domain for the holding capital. Above consideration then leads to the default regions $\mathbf{D}_{A,1} , \mathbf{D}_{A,2}\subset  \mathbf{D}$ ($\mathbf{D}_{B,1},\mathbf{D}_{B,2}\subset  \mathbf{D} $ respectively) for subsidiary $1$ and subsidiary $2$  and for holding of type $A$, respectively $B$. They are explicitly given by
\begin{align}
& \mathbf{D}_{A,1} = \{ (j,k)\in \D \vert j \leq x \mbox{ or }  j+k\leq 0, j\le -x\},\\
& \mathbf{D}_{A,2} =\{ (j,k)\in \D \vert k \leq x \mbox{ or }  j+k\leq 0, k\le -x\}, \\
& \mathbf{D}_{B,1}= \{ (j,k)\in \D \vert j \leq x \mbox{ or }  j+k\leq 0, j \le 0 \},\\
& \mathbf{D}_{B,2}= \{ (j,k)\in \D \vert k \leq x \mbox{ or }  j+k\leq 0, k\leq 0  \}.
\end{align}
\begin{figure}[t]
\begin{minipage}[b]{1\linewidth}
\begin{center}
\begin{tikzpicture}[->,>=stealth',shorten >=1pt,auto,node distance=2.3cm,
                    semithick]
\tikzset{
mystyle/.style={
  circle,
  inner sep=0pt,
  text width=6mm,
  align=center,
  draw=blue,
  fill=white
  }
}

\tikzset{
mystylered/.style={
  circle,
  inner sep=0pt,
  text width=6mm,
  align=center,
  draw=red,
  fill=white
  }
}
  \node[label={[shift={(-1.,-1.0)}]{\small bank 1}},circle,mystylered,label=left:{}] (A)                    {$-1$};
    \node[circle,mystyle,label=left:{}] (A2)  [below=0.08cm of A]                   {$2$};
  \node[label={[shift={(.0,.1)}]{\small bank 2}},circle,mystyle,label=left:{}]         (B) [above right of=A] {$1$};
      \node[circle,mystyle,label=left:{}] (B2)  [below=0.08cm of B]                   {$3$};
  \node[label={[shift={(1.0,-1.0)}]{\small bank 3}},circle,mystyle,label=right:{}]         (C) [below right of=B] {$0$};
        \node[circle,mystyle,label=left:{}] (C2)  [below=0.08cm of C]                   {$1$};

  \path (A) edge (C)
  edge [bend left] (B)
            edge               (C);
  \path (C2) edge [bend left, darkgreen] (B2);

\end{tikzpicture}
\end{center}
\end{minipage}
\begin{minipage}[t]{0.5\linewidth}
\begin{tikzpicture}[->,>=stealth',shorten >=1pt,auto,node distance=2.3cm,
                    semithick]
\tikzset{
mystyle/.style={
  circle,
  inner sep=0pt,
  text width=6mm,
  align=center,
  draw=blue,
  fill=white
  }
}

\tikzset{
mystylered/.style={
  circle,
  inner sep=0pt,
  text width=6mm,
  align=center,
  draw=red,
  fill=white
  }
}
  \node[label={[shift={(-1.,-1.0)}]{\small bank 1}},circle,mystylered,label=left:{}] (A)                    {$-1$};
    \node[circle,mystyle,label=left:{}] (A2)  [below=0.08cm of A]                   {$2$};
  \node[label={[shift={(.0,.1)}]{\small bank 2}},circle,mystyle,label=left:{}]         (B) [above right of=A] {$0$};
      \node[circle,mystyle,label=left:{}] (B2)  [below=0.08cm of B]                   {$3$};
  \node[label={[shift={(1.0,-1.0)}]{\small bank 3}},circle,mystylered,label=right:{}]         (C) [below right of=B] {$-1$};
        \node[circle,mystylered,label=left:{}] (C2)  [below=0.08cm of C]                   {$1$};

  \path (A) edge (C)
  edge [bend left] (B)
            edge               (C);
  \path (C2) edge [bend left, darkgreen] (B2);

\end{tikzpicture}
\end{minipage}\hfill
\begin{minipage}[t]{0.5\linewidth}
\tikzset{
mystyle/.style={
  circle,
  inner sep=0pt,
  text width=6mm,
  align=center,
  draw=blue,
  fill=white
  }
}

\tikzset{
mystylered/.style={
  circle,
  inner sep=0pt,
  text width=6mm,
  align=center,
  draw=red,
  fill=white
  }
}
\begin{tikzpicture}[->,>=stealth',shorten >=1pt,auto,node distance=2.3cm,
                    semithick]

  \node[label={[shift={(-1.,-1.0)}]{\small bank 1}},circle,mystylered,label=left:{}] (A)                    {$-1$};
    \node[circle,mystyle,label=left:{}] (A2)  [below=0.08cm of A]                   {$2$};
  \node[label={[shift={(.0,.1)}]{\small bank 2}},circle,mystyle,label=left:{}]         (B) [above right of=A] {$0$};
      \node[circle,mystyle,label=left:{}] (B2)  [below=0.08cm of B]                   {$2$};
  \node[label={[shift={(1.0,-1.0)}]{\small bank 3}},circle,mystylered,label=right:{}]         (C) [below right of=B] {$-1$};
        \node[circle,mystylered,label=left:{}] (C2)  [below=0.08cm of C]                   {$1$};

  \path (A) edge (C)
  edge [bend left] (B)
            edge               (C);
  \path (C2) edge [bend left, darkgreen] (B2);

\end{tikzpicture}
\end{minipage}
\caption{\label{fig2} Illustration of an exemplary contagion process for a network with three holdings of type $A$ with a support level $x=-1$. The pair of two circles arranged vertically form a holding with the top circle denoting subsidiary $1$ and the bottom circle subsidiary $2$. The edges between type $1$ subsidiaries are drawn in black and the edges between type $2$ subsidiaries are drawn in green. The cascade start (round $0$) is shown in the center upper graph. Subsidiary $1$ of bank $1$ has defaulted because its capital has reached the maximal support level and therefore defaults although the holding capital is still positive. This leads to $\mathcal{S}_{1,0}=\{1_1\}$ and $\mathcal{S}_{2,0}=\emptyset $, which triggers a short cascade evolving in two rounds. In round $1$, shown in the lower left graph, subsidiary $1$ of bank $3$ looses $1$ unit in capital, so its capital then becomes $c_{3,1}^{(1)}=-1$, which causes the entire bank holding $3$ to default and the round ends with ($\mathcal{S}_{1,1}=\{1_1, 3_1 \}$ and $\mathcal{S}_{2,1}=\{ 3_2 \}$). The result of round $2$, which is pictured in the lower right graph, is that subsidiary $2$ of bank $2$ looses one unit of capital, and its terminal capital is $c_{2,1}^{(2)}=2$. The reduced capital of subsidiary $2$ of bank $2$ does not lead to any further defaults ($\mathcal{S}_{1,2}=\{1_1, 3_1 \}$ and $\mathcal{S}_{2,2}=\{ 3_2 \}$).
}
\end{figure}

\begin{figure}[t]

\begin{minipage}[t]{0.5\linewidth}
\begin{tikzpicture}[->,>=stealth',shorten >=1pt,auto,node distance=2.3cm,
                    semithick]
\tikzset{
mystyle/.style={
  circle,
  inner sep=0pt,
  text width=6mm,
  align=center,
  draw=blue,
  fill=white
  }
}

\tikzset{
mystylered/.style={
  circle,
  inner sep=0pt,
  text width=6mm,
  align=center,
  draw=red,
  fill=white
  }
}
  \node[label={[shift={(-1.,-1.0)}]{\small bank 1}},circle,mystylered,label=left:{}] (A)                    {$-1$};
    \node[circle,mystyle,label=left:{}] (A2)  [below=0.08cm of A]                   {$2$};
  \node[label={[shift={(.0,.1)}]{\small bank 2}},circle,mystyle,label=left:{}]         (B) [above right of=A] {$1$};
      \node[circle,mystyle,label=left:{}] (B2)  [below=0.08cm of B]                   {$3$};
  \node[label={[shift={(1.0,-1.0)}]{\small bank 3}},circle,mystyle,label=right:{}]         (C) [below right of=B] {$0$};
        \node[circle,mystyle,label=left:{}] (C2)  [below=0.08cm of C]                   {$1$};

  \path (A) edge (C)
  edge [bend left] (B)
            edge               (C);
  \path (C2) edge [bend left, darkgreen] (B2);

\end{tikzpicture}
\end{minipage}\hfill
\begin{minipage}[t]{0.5\linewidth}
\begin{tikzpicture}[->,>=stealth',shorten >=1pt,auto,node distance=2.3cm,
                    semithick]
\tikzset{
mystyle/.style={
  circle,
  inner sep=0pt,
  text width=6mm,
  align=center,
  draw=blue,
  fill=white
  }
}

\tikzset{
mystylered/.style={
  circle,
  inner sep=0pt,
  text width=6mm,
  align=center,
  draw=red,
  fill=white
  }
}
  \node[label={[shift={(-1.,-1.0)}]{\small bank 1}},circle,mystylered,label=left:{}] (A)                    {$-1$};
    \node[circle,mystyle,label=left:{}] (A2)  [below=0.08cm of A]                   {$2$};
  \node[label={[shift={(.0,.1)}]{\small bank 2}},circle,mystyle,label=left:{}]         (B) [above right of=A] {$0$};
      \node[circle,mystyle,label=left:{}] (B2)  [below=0.08cm of B]                   {$3$};
  \node[label={[shift={(1.0,-1.0)}]{\small bank 3}},circle,mystylered,label=right:{}]         (C) [below right of=B] {$-1$};
        \node[circle,mystyle,label=left:{}] (C2)  [below=0.08cm of C]                   {$1$};

  \path (A) edge (C)
  edge [bend left] (B)
            edge               (C);
  \path (C2) edge [bend left, darkgreen] (B2);

\end{tikzpicture}

\end{minipage}
\caption{\label{fig2.1} In the same setup as Figure \ref{fig2}, illustration of a contagion process for a network with three holdings of type $B$ with a support level $x=-1$. The pair of two circles arranged vertically form a holding with the top circle denoting subsidiary $1$ and the bottom circle subsidiary $2$. The edges between type $1$ subsidiaries are drawn in black and the edges between type $2$ subsidiaries are drawn in green. 
Round $0$ is the same as for type $A$ -- subsidiary $1$ of bank $1$ defaults, $\mathcal{S}_{1,0}=\{1_1\}$ and $\mathcal{S}_{2,0}=\emptyset $. The difference is in round $1$, where while subsidiary $1$ of bank $3$ still looses $1$ unit in capital, and its capital also becomes $c_{3,1}^{(1)}=-1$, this will cause the holding to let it subsidiary to default, while the capital of subsidiary $2$ will remain $c_{3,1}^{(1)}=1.$ Therefore, the cascade of defaults stops. 
}
\end{figure}

\begin{figure}[ht]
\subfigure[]{
\begin{tikzpicture}[thick, scale=0.45, 
elipset/.style={
    ellipse, 
    draw=black, 
    minimum height=1.5em, 
    text width=4em,
    text centered, 
    on chain},
    ellipselarge/.style={
    ellipse, 
    draw=black, 
    minimum height=3.5em, 
    text width=4em,
    text centered},
  fsnode/.style={},
  ssnode/.style={},
  every fit/.style={ellipse,draw,inner sep=30pt,text width=10pt},
  inner/.style={circle,draw=blue!50,fill=blue!20,inner sep=1pt,scale=0.05},
    innerdead/.style={circle,draw=red!50,fill=red!20,inner sep=1pt,scale=0.03},
  innerblank/.style={circle,draw=none,inner sep=1pt,scale=0.00},
]

\begin{scope}[transparency group]
\begin{scope}[blend mode=multiply]
\node (A) at ( -3, 0) {};
\node (B) at (-3, 3) {};
\node (C) at ( 0, 0) {};
\node (G) at ( -3, -3) {};
\node (F) at (3, -3) {};
            \fill [blue!20] (G.center) -- (B.center) -- (F.center);
            \node (D) at ( 0, 0) {};
\node (E) at (0, -3) {};

            \fill [red!20] (G.center) -- (B.center) -- (F.center);
\fill[blue!20] (-5,-3) rectangle (-3,8); 
\fill[red!20] (-3,-5) rectangle (8,-3); 
\fill[blue!20] (-5,-5) rectangle (3,-3); 
\fill[red!20] (-5,-5) rectangle (-3,3); 
\end{scope}
\end{scope}

\draw[step=1cm,gray,very thin] (-5,-5) grid (12,12);
\draw[thick,->] (-5,0) -- (10,0) node[anchor=north west] {$c_{i,1}$};
\draw[thick,->] (0,-5) -- (0,10) node[anchor=south east] {$c_{i,2}$};
\draw (8 cm,1pt) -- (8 cm,-1pt) node[anchor=north west] {$R$};
\draw (1pt, 8 cm) -- (-1pt, 8 cm) node[anchor=south east] {$R$};

\draw (-3 cm,1pt) -- (-3 cm,-1pt) node[anchor=north] {$x$};
\draw (1pt, -3 cm) -- (-1pt, -3 cm) node[anchor=east] {$x$};
\begin{scope}[blend mode=multiply]
\draw[blue, dashed, very thick] (-2 cm,8cm) -- (-2 cm,3cm);
\draw[blue, dashed, very thick] (-2 cm, 3cm) -- (4 cm,-3cm);
\draw[blue, dashed, very thick] (4 cm,-3cm) -- (4 cm,-5cm);

\draw[red, dashed, very thick] (8 cm,-2cm) -- (3cm,-2cm);
\draw[red, dashed, very thick] (3cm, -2cm) -- (-3 cm,4cm);
\draw[red, dashed, very thick] (-5 cm,4 cm) -- (-3 cm,4cm);
\end{scope}

\draw[grey, dashed, very thick] (-2 cm,8cm) -- (8cm,8cm);
\draw[grey, dashed, very thick] (8cm,8cm) -- (8cm,-2cm);

\end{tikzpicture}}
   \hfill \subfigure[]{
\begin{tikzpicture}[thick, scale=0.45,
elipset/.style={
    ellipse, 
    draw=black, 
    minimum height=1.5em, 
    text width=4em,
    text centered, 
    on chain},
    ellipselarge/.style={
    ellipse, 
    draw=black, 
    minimum height=3.5em, 
    text width=4em,
    text centered},
  fsnode/.style={},
  ssnode/.style={},
  every fit/.style={ellipse,draw,inner sep=30pt,text width=10pt},
  inner/.style={circle,draw=blue!50,fill=blue!20,inner sep=1pt,scale=0.05},
    innerdead/.style={circle,draw=red!50,fill=red!20,inner sep=1pt,scale=0.03},
  innerblank/.style={circle,draw=none,inner sep=1pt,scale=0.00},
]

\begin{scope}[transparency group]
\begin{scope}[blend mode=multiply]
\node (A) at ( -3, 0) {};
\node (B) at (-3, 3) {};
\node (C) at ( 0, 0) {};
            \fill [blue!20] (A.center) -- (B.center) -- (C.center);
            \node (D) at ( 0, 0) {};
\node (E) at (0, -3) {};
\node (F) at (3, -3) {};
            \fill [red!20] (D.center) -- (E.center) -- (F.center);
\fill[blue!20] (-5,0) rectangle (-3,8); 
\fill[red!20] (0,-5) rectangle (8,-3); 
\fill[blue!20] (-5,-5) rectangle (0,0); 
\fill[red!20] (-5,-5) rectangle (0,0); 
\end{scope}
\end{scope}

\draw[step=1cm,gray,very thin] (-5,-5) grid (12,12);
\draw[thick,->] (-5,0) -- (10,0) node[anchor=north west] {$c_{i,1}$};
\draw[thick,->] (0,-5) -- (0,10) node[anchor=south east] {$c_{i,2}$};
\draw (8 cm,1pt) -- (8 cm,-1pt) node[anchor=north west] {$R$};
\draw (1pt, 8 cm) -- (-1pt, 8 cm) node[anchor=south east] {$R$};

\draw (-3 cm,1pt) -- (-3 cm,-1pt) node[anchor=north] {$x$};
\draw (1pt, -3 cm) -- (-1pt, -3 cm) node[anchor=east] {$x$};
\begin{scope}[blend mode=multiply]
\draw[blue, dashed, very thick] (-2 cm,8cm) -- (-2 cm,3cm);
\draw[blue, dashed, very thick] (-2 cm, 3cm) -- (1 cm,0);
\draw[blue, dashed, very thick] (1 cm,0) -- (1 cm,-3cm);

\draw[red, dashed, very thick] (8 cm,-2cm) -- (3cm,-2cm);
\draw[red, dashed, very thick] (3cm, -2cm) -- (0 cm,1cm);
\draw[red, dashed, very thick] (0 cm,1 cm) -- (-3 cm,1cm);
\end{scope}

\draw[grey, dashed, very thick] (-2 cm,8cm) -- (8cm,8cm);
\draw[grey, dashed, very thick] (8cm,8cm) -- (8cm,-2cm);
\end{tikzpicture}}

\caption{\label{fig1} (a) Illustration for type $A$ holding. (b) illustration for type $B$ holding. Blue filled area is the default region of subsidiary 1, red filled area is the default region of subsidiary 2. The blue and red dotted lines correspond to the boundaries $\d \mathbf{D}_{A,1}$ and $\d \mathbf{D}_{A,2}$ for type $A$ (respectively $\d \mathbf{D}_{B,1}$ and $\d \mathbf{D}_{B,2}$ for type $B$.)}
\end{figure}

With the help of these regions we may then rewrite 
$$\mathcal{S}_{l,k} = \{ i_l \in [n_l] \colon (c_{i,1}^{(k)},c_{i,1}^{(k)})\in  \mathbf{D}_{J,l} \},$$
for $J \in \{A, B\}$. 
Recall that, to simplify the notation, the dependency of $\mathcal{S}_{l,k}$ on $J$ is implicit.
In Figure~\ref{fig1} we display the default regions for both holding structures of type $A$ and $B$. By the definition of the contagion process for type $B$ it follows trivially that smaller $x$ leads to less infections. However, we will see in the following that the situation for type $A$ is more complicated and interesting phenomena arise. 

In order to determine $n^{-1} \abs{\mathcal{S}_{l,2n-1}}$ we need to specify our  random network setting. We fix $p_1\in \mathbb{R}_{+}$ and $p_2\in \mathbb{R}_{+}$ and assume that $e^l_{i,j} $ is present with probability $p_l/n$ for $l\in \{ 1,2 \}$ and that all edges are mutually independent. Instead of a fixed network we analyse a sequence of networks of increasing size. For $n\in\mathbb{N}$, let as before $c_{i,1}=c_{i,1}(n)$ and $c_{i,2}=c_{i,2}(n)$ be the initial capital of subsidiary $1$ and $2$ of bank $i$. Let then $\mathbf{c}_{1}(n) =(c_{1,1} (n), \dots , c_{n,1}(n) )$ and $\mathbf{c}_{2}(n) =(c_{1,2} (n), \dots , c_{n,2}(n) )$ be the capitals of the subsidiaries for a network of size $n$. We will often drop the dependency on $n$ in the notation when it does not lead to confusion. 
Let $X^l_{i,j}$ be the indicator that a directed edge from subsidiary $l$ of bank $i$ to subsidiary $l$ of bank $j$ is present. Let $D^{l-}_i:=\sum_{j\in [n]\setminus \{ i\} } X^l_{j,i}$ and $D^{l+}_i:=\sum_{j\in [n]\setminus \{ i\} } X^l_{i,j}$ be the in-, respectively out-degree, of subsidiary $l$ of bank $i$. Clearly then $\E [ D^{l \pm}_i ] = p_l + o(1) $ and $D^{l \pm}_i \xrightarrow{p}{} \poi (p_l)$ as $n\rightarrow\infty$, where $\xrightarrow{p}{}$ denotes convergence in probability.

 In order to ensure that our asymptotic statements are relevant also for networks of finite size we shall need some regularity. In particular we require that the proportion of banks with a certain capital structure stabilizes as the network size increases. This is ensured by the following standing assumption for the rest of this paper:

\begin{assumption}\label{ass:regularity}
For each $n\in\N$, denote the joint empirical distribution function of $\mathbf{c}_1(n)$ and $\mathbf{c}_2 (n)$ by
\[ F_n(x,y)=n^{-1}\sum_{i\in[n]}\1_{\{c_{i,1} (n)\leq x,c_{i,2}(n)\leq y\}},\quad  (x,y)\in \mathbf{D} . \]
We assume that there exists a distribution~$F$ on $\mathbf{D}$ such that it holds $\lim_{n\to\infty}F_n(x,y)=F(x,y)$ for all $(x,y)\in \mathbf{D}$. Denote by $(C_1,C_2)$ a random vector distributed according to $F$.
\end{assumption}
Considering a network of holding types $J \in \{A,B \}$, we note that for fixed $n\in \mathbb{N}$ the contagion process only starts if there exists at least one bank $i$ such that $(c_{i,1},c_{i,2} ) \in \mathbf{D}_{J,1} \cup \mathbf{D}_{J,2}$. 
Under Assumption~\ref{ass:regularity}, this will be the case for $n$ large if $\mathbb{P} \left( ( C_1,C_2) \in \mathbf{D}_{J,1} \cup \mathbf{D}_{J,2} \right)>0$. 

Denote with superscript $c$ the complement of the domain with respect to $\D$, e.g., $\D^c_{A,1} = \D \backslash \D_{A,1}$. We then further define boundaries of the default regions by:
\begin{align}
&\d \mathbf{D}_{A,1} = \{ (j,k)\in \D^c_{A,1} \vert j=x+1 \mbox{ or }  j+k=1, x<j\le -x\mbox{ or }  j=-x+1,k\le x\},\\
& \d\mathbf{D}_{A,2} =\{ (j,k)\in \D^c_{A,2}  \vert k=x+1 \mbox{ or }  j+k=1, x<k\le -x\mbox{ or }  k=-x+1,j\le x\}, \\
& \d\mathbf{D}_{B,1}= \{ (j,k)\in \D^c_{B,1} \vert j=x+1 \mbox{ or }  j+k = 1, j \leq 0 \mbox{ or } j=1,k\le0 \},\\
& \d \mathbf{D}_{B,2}= \{ (j,k)\in \D^c_{B,2} \vert k=x+1 \mbox{ or }  j+k =1, k\leq 0 \mbox{ or } k=1,j \le 0 \}.
\end{align}
For example for a holding to be in $\d \mathbf{D}_{A,1}$ means that its subsidiary $1$ is at risk of default. This is the case either because the subsidiary defaults as soon as it has a new defaulted neighbor ($j=x+1$ or $j+k=1,~ x<j\le -x$) itself, or because it defaults as soon as the other subsidiary has a new defaulted neighbor ($j+k=1,x<j\le -x$). We refer again to Figure~\ref{fig1} for an illustration of these sets. 

Moreover, we define the following subsets $\d \mathbf{D}_{J^{lm}} $ to be the set of capital structures such that a link to subsidiary $m$ results in a default of subsidiary $l$, for $J\in \{ A,B\}, l,m \in \{1,2\}$. These sets are given by
\begin{align}
&\d \mathbf{D}_{A^{11}} = \{ (j,k)\in \D^c_{A,1} \vert j=x+1 \mbox{ or }   j+k=1, x<j\le -x \mbox{ or } j=-x+1,k\le x\},\\
& \d\mathbf{D}_{A^{21}} =\{ (j,k)\in \D^c_{A,2}  \vert    j+k=1,k\le -x\}, \\
& \d\mathbf{D}_{B^{11}}= \{ (j,k)\in \D^c_{B,1} \vert j=x+1 \mbox{ or } j=1,k\le0 \mbox{ or } j+k = 1, j \leq 0\},\\
& \d \mathbf{D}_{B^{21}}= \{ (j,k)\in \D^c_{B,2} \vert j+k =1,k\leq 0  \},
\end{align}
and
\begin{align}
&\d \mathbf{D}_{A^{12}} = \{ (j,k)\in \D^c_{A,1} \vert j+k=1, j\le-x\},\\
& \d\mathbf{D}_{A^{22}} =\{ (j,k)\in \D^c_{A,2}  \vert k=x+1, x<k\le -x \mbox{ or } j+k=1 \mbox{ or } k=-x+1,j\le x\}, \\
& \d\mathbf{D}_{B^{12}}= \{ (j,k)\in \D^c_{B,1} \vert j+k = 1, j\leq 0 \},\\
& \d \mathbf{D}_{B^{22}}= \{ (j,k)\in \D^c_{B,2} \vert k=x+1 \mbox{ or } k=1,j\le0 \mbox{ or } j+k = 1, k \leq 0\}.
\end{align}
 
It clearly holds that $\d \mathbf{D}_{A,l} = \d \mathbf{D}_{A^{l 1}}\cup \d \mathbf{D}_{A^{l 2}}$ and $\d \mathbf{D}_{B,l} = \d \mathbf{D}_{B^{l 1}}\cup \d \mathbf{D}_{B^{l 2}}$ for $l\in \{1 ,2 \}$. 
These boundary regions turn out to be important as they allow us to derive criteria for the contagion process to stop. Heuristically, bank holdings with capital on the boundary have very vulnerable subsidiaries and if there are too many of such holdings, the contagion process can quickly regain momentum even after it had slowed down.

Our first result, Theorem~\ref{thm:threshold:model}, describes the size of the final fraction of defaulted type $1$ and type $2$ subsidiaries. Its proof is based on a sequential reformulation of the process. This formulation leads to the same final outcome, but instead of exploring the effect of all defaulted subsidiaries at once, in each step only the effect of one defaulted subsidiary is considered. It turns out that in this reformulated process, the entire state of the system can be described in the limit ($n\rightarrow \infty$) by a system of continuous functions.  In the following we introduce these functions and provide some heuristic explanation for the results whose rigorous proof is given in Section~\ref{sec:proofs}. 

Let 
$\psi_i (\lambda):=e^{-\lambda }\frac{\lambda^i}{i !}$ for $\lambda \geq 0$ be the probability that a Poisson distributed random variable with parameter $\lambda$ takes value $i$. Let further $\Psi_k (\lambda) =\sum_{j=k}^{\infty} e^{-\lambda  }  \frac{\lambda^j}{j!}$ for $\lambda \geq 0$ be the probability that a Poisson distributed random variable with parameter $\lambda$ is greater or equal than $k$. We further define $\Psi_k (\lambda)=1$ for $k<0$.
We also define $\psi_{i,j} (\lambda_1,\lambda_2):= e^{-\lambda_1 }\frac{\lambda_1^i}{i !}e^{-\lambda_2 }\frac{\lambda_2^j}{j !}$ for $\lambda_1,\lambda_2\geq 0$ to be the probability that two independent Poisson distributed random variables $X$ and $Y$ with parameter $\lambda_1$ and $\lambda_2$ respectively, take the values $i$ and $j$, i.e. $\P (X=i,Y=j)=\psi_{i,j} (\lambda_1,\lambda_2)$.

Now define the functions $a_{j,k}(z_1,z_2)$ for $(j,k)\in \mathbf{D}$ by
\begin{equation}
a_{j,k}(z_1,z_2) = \E [  \1_{\{ (C_1,C_2) \geq (j,k) \}} \psi_{C_1-j,C_2-k} (p_1 z_1,p_2 z_2) ],
\label{eq:a-jk}
\end{equation}
where the inequality $(C_1,C_2) \geq (j,k)$ is meant component-wise. In the continuous approximation mentioned above, the two arguments $z_1$ and $z_2$ describe the current number of defaulted type $1$ respectively type $2$ subsidiaries whose effect on the system has already been taken into account, divided by $n$. These get updated in subsequent rounds. For a holding with original subsidiary capitals $(C_1,C_2) \geq (j,k)$, the probability that due to links from these defaulted institutions, the capital has been reduced from $(C_1, C_2)$ by $(C_1-j,C_2-k)$ to $(j,k)$, is approximately given by $\psi_{C_1-j,C_2-k} (p_1 z_1,p_2 z_2)$, and thus heuristically $a_{j,k}(z_1,z_2)$ describes the fraction of holdings that have the capital $(j,k)$, after a fraction $z_1$ (respectively $z_2$) of type $1$ (respectively type $2$) subsidiaries have defaulted.  

With the functions $\{a_{j,k}\}_{(j,k)\in \mathbf{D}}$ defined, we may sum over all those $(j,k)$ that are in the default region of subsidiary $l$ in order to obtain the total fraction of defaulted type $l$ subsidiaries $\sum_{(j,k) \in \mathbf{D}_{J,l}} a_{j,k}(z_1,z_2) $. The effect of the fraction $z_l$ of them on the system has already been considered, and it remains to explore the impact of the rest (if positive), which is given by 
\begin{eqnarray}
f_{l}(z_1,z_2)&:=& \sum_{(j,k) \in \mathbf{D}_{J,l}} a_{j,k}(z_1,z_2)  -z_l,
\label{eq:f-a}
\end{eqnarray}
for $J\in \{A,B \}, l\in \{1,2 \}$. Recall that to simplify the notation, the dependency on $J$ of $f_l(z_1, z_2))$ is implicit.

A calculation based on Figure~\ref{fig1} concludes that 
\begin{eqnarray}
f^A_{1} (z_1,z_2)&=&   -z_1  + \P ((C_1,C_2) \in \mathbf{D}_{A,1} ) +\E [ \1_{\{ (C_1,C_2) \in \mathbf{D}_{A,1}^c \}} \Psi_{C_1-x} (p_1 z_1)] \nonumber \\
&+& \E \left[   \sum_{j=(x+1)}^{C_1\wedge -x} \1_{\{ (C_1,C_2) \in \mathbf{D}_{A,1}^c \}}   \psi_{C_1- j} (p_1 z_1 )  \Psi_{C_2+ j } (p_2 z_2) \right], \nonumber 
\end{eqnarray}
and
\begin{eqnarray}
f^B_{1} (z_1,z_2)&=&   -z_1  + \P ((C_1,C_2) \in \mathbf{D}_{B,1} ) +\E [ \1_{\{ (C_1,C_2) \in \mathbf{D}_{B,1}^c \}} \Psi_{C_2-x} (p_2 z_2)] \nonumber \\
&+& \E \left[  \sum_{j=x+1}^{C_1\wedge 0} \1_{\{ (C_1,C_2) \in \mathbf{D}_{A,1}^c \}}   \psi_{C_1- j} (p_1 z_1 )  \Psi_{C_2+ j } (p_2 z_2) \right]. \nonumber 
\end{eqnarray}
The corresponding expressions $f^A_{2}$ and $f^B_{2}$ for subsidiary $2$ can be derived similarly.

It will turn out that the first joint zero of the functions $f_1$ and $f_2$ allows us to determine the final number of defaulted type $1$ and type $2$ subsidiaries. 
Note that for $x=0$ we get that $f^A_{1} (z_1,z_2) = f^B_{1} (z_1,z_2)=  \P ((C_1,C_2) \in \mathbf{D}_1) -z_1  + \E [ \1_{\{ (C_1,C_2) \in \mathbf{D}^c_1\}} \Psi_{C_1} (p_1 z_1)] $ is constant in $z_2$ and $f^A_{2} (z_1,z_2) = f^B_{2} (z_1,z_2)$ is constant in $z_1$ and thus the systems decouple. In this case the functions $f_1$ and $f_2$ describe contagion processes in two completely separated networks. 

Let us also calculate the partial derivatives of $f_l$ as they appear in the statement of the following theorem. With $\bm{z} = (z_1,z_2)$, for $J\in \{A,B \}, l,m\in \{1,2 \}$ one obtains that
\begin{eqnarray}\label{derivative}
\frac{\partial f_l}{\partial z_m} (\bm{z}) &=& - \delta_{l,m}   +p_m \sum_{(j,k) \in  \d \mathbf{D} _{J^{lm}}  }  a_{j,k}(\bm{z}), \label{deriv:f}
\end{eqnarray}
where $\delta_{i,j} = \1_{\{ i=j\}}$.
For a vector $\bm{v} = (v_1, v_2)\in \mathbb{R}^2$ we obtain the directional derivatives 
\begin{eqnarray}\label{directional:der}
D_{\bm{v}} f_l (\bm{z})  &=&  - v_l   + v_1 p_1 \sum_{(j,k) \in  \d \mathbf{D} _{J^{l1}}  }  a_{j,k}(\bm{z}) + v_2 p_2 \sum_{(j,k) \in  \d \mathbf{D} _{J^{l2}}  }  a_{j,k}(\bm{z}).
\end{eqnarray}
In the following proposition we collect some more properties of the function $f_l$ that will be important for the statement and proof of our first main Theorem~\ref{thm:threshold:model}. 
\begin{proposition}\label{properties:f}
Let $J\in \{A,B \}$. The functions $f_l, l \in \{ 1,2 \}$ are continuous and:
\begin{enumerate}
    \item $f_{l}(0,0)=\P((C_1,C_2)\in \mathbf{D}_{J,l})$ and thus $(0,0)$ is a joint zero of the functions $f_{1},f_{2}$ if and only if $\P((C_1,C_2)\in \mathbf{D}_{J,1}\cup \mathbf{D}_{J,2})=0$.
    \item $f_1$ is increasing in its second argument and $f_2$ is increasing in its first argument.
\end{enumerate}

\end{proposition}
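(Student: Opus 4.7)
The plan is to verify the four claims—continuity, the identity $f_l(0,0)=\mathbb{P}((C_1,C_2)\in\mathbf{D}_{J,l})$, the characterization of $(0,0)$ as a joint zero, and the cross-monotonicity—directly from the defining formulas \eqref{eq:a-jk}--\eqref{eq:f-a} together with the partial-derivative identity \eqref{derivative}, without needing any new probabilistic input.

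First I would dispose of continuity. Because all subsidiary capitals lie in the bounded lattice $\mathbf{D}$, the expectation \eqref{eq:a-jk} reduces to a finite sum of Poisson products $\psi_{C_1-j,C_2-k}(p_1 z_1,p_2 z_2)$, each of which is a polynomial in $(z_1,z_2)$ multiplied by $e^{-p_1z_1-p_2z_2}$ and hence smooth. Since $\mathbf{D}_{J,l}$ is itself a finite subset of $\mathbf{D}$, formula \eqref{eq:f-a} expresses $f_l$ as a finite sum of smooth functions, so $f_l$ is in particular continuous.

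For claim (1), I would simply evaluate at the origin. A Poisson random variable with parameter $0$ is almost surely $0$, so $\psi_{i,j}(0,0)=\mathbf{1}\{i=0,\,j=0\}$, which gives $a_{j,k}(0,0)=\mathbb{P}(C_1=j,\,C_2=k)$. Summing over the indicator of $\mathbf{D}_{J,l}$ as prescribed by \eqref{eq:f-a} (and using $z_l=0$) yields $f_l(0,0)=\mathbb{P}((C_1,C_2)\in\mathbf{D}_{J,l})$. Since both of $f_1(0,0), f_2(0,0)$ are non-negative, the system $f_1(0,0)=f_2(0,0)=0$ is equivalent to both of these probabilities vanishing, which by monotonicity and subadditivity of measures is in turn equivalent to $\mathbb{P}((C_1,C_2)\in\mathbf{D}_{J,1}\cup\mathbf{D}_{J,2})=0$.

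For claim (2), I would invoke \eqref{derivative}: for $l\neq m$ the Kronecker term vanishes, leaving
$$\frac{\partial f_l}{\partial z_m}(z_1,z_2)=p_m\sum_{(j,k)\in\partial\mathbf{D}_{J^{lm}}}a_{j,k}(z_1,z_2)\geq 0,$$
since $p_m\geq 0$ and each $a_{j,k}$, being an expectation of a non-negative quantity, is itself non-negative. In particular $f_1$ is non-decreasing in $z_2$ and $f_2$ is non-decreasing in $z_1$, as required. The only non-routine step in this paragraph is the justification of \eqref{derivative} itself: differentiating \eqref{eq:f-a} in $z_m$ and applying the identity $\partial_\lambda \psi_i(\lambda)=\psi_{i-1}(\lambda)-\psi_i(\lambda)$ collapses the interior of the lattice sum by telescoping and leaves precisely the boundary contribution over $\partial\mathbf{D}_{J^{lm}}$, together with the $-\delta_{l,m}$ coming from differentiating the $-z_l$ term in \eqref{eq:f-a}. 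This bookkeeping—identifying which boundary lattice points correspond to a subsidiary $l$ defaulting upon the arrival of one new link from a subsidiary of type $m$—is the main obstacle, but it has been laid out explicitly in the definitions of the sets $\partial\mathbf{D}_{J^{lm}}$ preceding the proposition.
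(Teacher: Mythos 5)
Your proof is correct and takes essentially the same route as the paper's: evaluate $\psi_{i,j}(0,0)=\1\{i=j=0\}$ to obtain $a_{j,k}(0,0)=\P(C_1=j,C_2=k)$ and hence claim (1), and read claim (2) off the non-negativity of the boundary sum in \eqref{derivative} (with $\delta_{l,m}=0$ for $l\neq m$). The only minor deviation is in the continuity argument, where you observe that $f_l$ is a finite sum of smooth functions because $\mathbf{D}$ is a bounded lattice, whereas the paper cites continuity of $\psi_k,\Psi_k$ plus dominated convergence; both are valid, and your note that the verification of \eqref{derivative} itself rests on the telescoping identity $\partial_\lambda \psi_i = \psi_{i-1}\1\{i\ge 1\}-\psi_i$ correctly identifies a calculation the paper states but does not re-derive inside this proof.
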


We are now ready to state the first result that allows us to determine the final number of defaulted type $1$ and type $2$ subsidiaries. We provide several applications of this result in Section~\ref{casestudy}.
\begin{theorem}\label{thm:threshold:model}
Let $J\in \{A,B \}$. Consider a sequence of financial systems satisfying Assumption \ref{ass:regularity} and let $\mathbb{P} \left( (C_1 , C_2) \in \mathbf{D}_{J,1}\cup \mathbf{D}_{J,2} \right)>0 $. Then there exists a unique smallest (component-wise) positive joint root  $\hat{z}=(\hat{z}_1,\hat{z}_2)$ of the functions $f_{i}(z_1,z_2), i\in \{ 1,2\}$.
Moreover, the following holds for $i\in \{1,2\}$:
\begin{enumerate}
\item For all $\epsilon>0$, with high probability
$ n^{-1}\mathcal{S}_{i,2n-1} \geq \hat{z}_i - \epsilon.$
\item If in addition there exists a vector $\bf{v} \in \mathbb{R}^2_+$ such that 
$D_{\bf{v}} f_1 (\hat{z}), D_{\bf{v}} f_2 (\hat{z})  <0$, then
\[ n^{-1}\mathcal{S}_{l,2n-1} \xrightarrow{p}  \hat{z}_i,\quad\text{as }n\to\infty. \] Here $ \xrightarrow{p}$ denotes convergence in probability.
\end{enumerate}
\end{theorem}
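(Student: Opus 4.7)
The overall strategy I would follow is a \emph{sequential reformulation} of the cascade, in the spirit of the death-process (or differential-equation) method used for bootstrap percolation on inhomogeneous random graphs (see e.g.\ \cite{Amini2014, Detering2015a}), now adapted to the two-layer setting. Rather than revealing all outgoing edges of each defaulted subsidiary simultaneously within a single generation, I would process defaulted subsidiaries one at a time: at each step $t$, pick any subsidiary that has defaulted but whose outgoing edges have not yet been exposed, reveal these edges, and update the capitals of its neighbours. Let $Z_1(t),Z_2(t)$ denote the number of type-$1$ respectively type-$2$ subsidiaries whose effects have been exposed by step $t$, and for each $(j,k)\in\mathbf{D}$ let $N_{j,k}(t)$ be the number of holdings whose current capital pair equals $(j,k)$. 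Since edges are independent Bernoulli$(p_l/n)$ and are only conditioned on when exposed, the number of outgoing edges of a freshly exposed default is asymptotically Poisson with the appropriate intensity. A preliminary step is to show that the smallest joint positive root $\hat z$ exists, which follows from the continuity and monotonicity properties of Proposition~\ref{properties:f} together with a compactness argument on the sub-level sets of $(f_1,f_2)$.

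The central probabilistic input is the following concentration claim: uniformly over $t\le 2n$ and $(j,k)\in\mathbf{D}$,
\[
  \frac{N_{j,k}(t)}{n} \;=\; a_{j,k}\!\left(\frac{Z_1(t)}{n},\frac{Z_2(t)}{n}\right) + o_p(1),
\]
with $a_{j,k}$ defined in~\eqref{eq:a-jk}. The intuition is that a holding originally at capital $(C_1,C_2)\ge (j,k)$ receives, independently on each layer, a number of hits that is asymptotically Poisson with mean $p_l Z_l(t)/n$, and must have received exactly $C_l-j$ hits on layer $l$ to be in state $(j,k)$. Given Assumption~\ref{ass:regularity}, this is a standard martingale-concentration argument (Azuma-Hoeffding on the edge-exposure martingale, or a direct second-moment bound). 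Summing $N_{j,k}(t)/n$ over $(j,k)\in\mathbf{D}_{J,l}$ recovers the total fraction of defaulted type-$l$ subsidiaries by step $t$, and by~\eqref{eq:f-a} the fraction of defaulted-but-unexplored type-$l$ subsidiaries is $f_l(Z_1(t)/n,Z_2(t)/n)+o_p(1)$.

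For part (1) I would run the exploration until both unexplored pools are empty. The concentration result forces $f_1(Z_1(t)/n,Z_2(t)/n)\ge -o_p(1)$ and $f_2(Z_1(t)/n,Z_2(t)/n)\ge -o_p(1)$ along the entire trajectory, so with high probability the process cannot terminate strictly inside $\{f_1>0\}\cup\{f_2>0\}$. A monotonicity argument using Proposition~\ref{properties:f} then shows that any asymptotic termination point dominates $\hat z$ componentwise, yielding $n^{-1}|\mathcal{S}_{l,2n-1}|\ge \hat z_l-\varepsilon$ with high probability.

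For part (2), the directional hypothesis $D_{\mathbf v}f_1(\hat z),D_{\mathbf v}f_2(\hat z)<0$ is precisely the \emph{transversality condition} needed for the matching upper bound. Since the $f_l$ are continuous, there is a neighbourhood of the form $\{\hat z + t\mathbf v: 0<t<\delta\}$, suitably thickened, on which both $f_1$ and $f_2$ are strictly negative; once $(Z_1(t)/n,Z_2(t)/n)$ enters this neighbourhood the concentration step forces the two unexplored pools to be empty and the exploration halts. The hardest part of the argument is exactly this final step: because exploration can interleave the two default types in any order, the scaled trajectory is not a single deterministic curve, and one has to rule out, uniformly across all admissible exploration schedules, that the trajectory escapes the box $[0,\hat z_1+\varepsilon]\times[0,\hat z_2+\varepsilon]$. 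The existence of one common vector $\mathbf v$ with both directional derivatives negative is the geometric condition that makes such a uniform escape-prevention argument go through, and is the reason it appears explicitly in the hypothesis.
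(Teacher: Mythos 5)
Your high-level strategy — sequential exploration, tracking the empirical capital distribution, concentration yielding $f_l$, and using the transversality hypothesis to close the upper bound — matches the paper's. But there are two concrete issues, one of emphasis and one of substance.

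First, the concentration step. You assert $N_{j,k}(t)/n = a_{j,k}(Z_1(t)/n, Z_2(t)/n) + o_p(1)$ uniformly and call it a "standard" Azuma or second-moment argument. The paper instead applies Wormald's differential-equation method: it writes down the one-step conditional drifts of $(u_l(t), \tilde c_{j,k}(t), n_l(t))$, sets up the limiting ODE system \eqref{ODE1}--\eqref{ODE2:beta}, solves it explicitly (formula \eqref{sol:gamma} and the Ansatz \eqref{sol:nu}), and invokes Wormald to get uniform concentration of the whole scaled trajectory around the ODE solution. The subtlety your phrasing glosses over is that both $N_{j,k}(t)$ and $(Z_1(t),Z_2(t))$ are co-evolving, mutually dependent random quantities, and what one actually needs is joint concentration of the entire vector-valued process — precisely what the DE method packages. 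Your deferred-decisions/martingale route could be made to work, but it is not "standard" in the sense of being one line; you would need to address the interaction between the exploration schedule and the in-edge exposure for each holding, and boundedness of step increments (the paper handles this by bounding the maximal out-degree). So this is a genuinely different technical tool from the paper, acceptable in principle, but you present it as easier than it is.

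Second, and more seriously, Part (2) has a real gap. You correctly identify "the hardest part of the argument" — ruling out a late re-ignition of the cascade after $(Z_1(t)/n, Z_2(t)/n)$ approaches $\hat z$ — but you do not actually resolve it. Saying that "the concentration step forces the two unexplored pools to be empty and the exploration halts" is not correct: concentration only gives $|U_l(t)|/n = f_l(\cdot) + o_p(1)$, so near $\hat z$ the pools are $o(n)$ but not exactly zero, and an $o(n)$ seed can in principle reignite a macroscopic cascade. This is exactly the phenomenon the directional-derivative condition must exclude, and the exclusion requires a quantitative argument. The paper's proof does this by stopping the Wormald coupling at time $\lfloor \tau_{\delta_1} n\rfloor$, classifying the remaining susceptible holdings into \emph{weak} (one more hit triggers default) and \emph{strong} (at least two more hits needed), and then running a round-by-round counting argument: the inequalities \eqref{v1Ineq}--\eqref{v2Ineq}, which come directly from $D_{\mathbf v}f_l(\hat z)<0$, give a contraction factor $c+c'<1$ so that the expected number of new defaults decays geometrically across rounds, and a Markov-inequality bound \eqref{MI} then controls the total overshoot. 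Your proposal names the obstacle and names the hypothesis that should defeat it, but does not supply the contraction argument; without something playing the role of the weak/strong bookkeeping and the geometric decay, Part (2) is unproven.
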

These results are also intuitive. According to (\ref{eq:f-a}), the functions $f_{1}(z_1,z_2)$ and $f_{2}(z_1,z_2)$ quantify in a sequential exploration the fraction of defaulted subsidiaries of types $1$ and $2$ respectively, that still need to be explored. Therefore, for any $(z_1,z_2)<(\hat{z}_1,\hat{z}_2)$, there exist defaulted subsidiaries whose effect on the system has not been explored yet, and the first joint zero $(\hat{z}_1,\hat{z}_2)$ serves as a lower bound on the number of defaulted subsidiaries of type $1$ and $2$. 

The derivative condition 2. in the last theorem is a stability condition on the root. In a 1-dimensional setting it would simply corresponds to the derivative not being equal to zero in the root and would ensure that the root is not a saddle point. If this condition does not hold, then a minor modifications of the system (in terms of the specification of $(C_1,C_2)$) could change the outcome of the contagion process significantly. In particular the outcome might be influenced by a sub-linear proportion of the holdings/vertices. Knowing the limit $\lim_{n\to\infty}F_n(x,y)$ (see Assumption~\ref{ass:regularity}) is then not sufficient to determine the actual outcome of the process.

In this work we do not aim for the greatest possible generality but for a flexible framework that allows to analyze the effect of subsidiaries and holdings in mitigating and propagating contagion effects. The model can be extended in several directions:
\begin{itemize}
\item One can consider a model in which the support level $x$ depends on the holding $i$. Under a convergence assumption similar to Assumption \ref{ass:regularity} but incorporating the individual support levels $x_i$, the system can then be described by a random vector $(C_1,C_2,X)$. The only change would be that in the specification of $f$ the $x$ is replaced by the random variable $X$.  In fact we will consider such an example in Section~\ref{casestudy}.
\item We could consider a model in which the holding itself has capital. The support level $x$ could then depend on this capital. This leads to a more complicated function $f$ and a more involved exploration process but one still obtains a $2$-dimensional fixed-point equation that allows one to determine the final number of defaults.
    \item One could also consider more subsidiaries. Considering $M\in \mathbb{N}$ instead of only $2$ subsidiaries would lead to an $M$ dimensional fixed-point equation. We would like to remark however that in some situations a setup with two subsidiaries might work well even for an empirical analysis. Say a bank has $M>2$ subsidiaries but the distress starts only in one subsidiary. As a first approximation, one could then consider the remaining $M-1$ subsidiaries as one entity which supports the distressed subsidiary up to a capital of $x$.
    \item More general default regions can be considered, allowing for different ways the distress spreads through the holding. This is possible as long as the default region has certain natural properties. For example the default of one subsidiary needs to be more likely with the capital of the other subsidiaries decreasing. Shocks to a subsidiary can not improve the situation of the other subsidiary of the same holding. 
    \item The multi-layer random network that we consider is very homogeneous in terms of the degrees of the subsidiaries. In fact, as $n\rightarrow \infty$, the degree $D^{l \pm}_i \xrightarrow{p}{} \poi (p_l)$. Since the Poisson distribution has very light tails, large degrees are very rare. More heterogeneity in terms of the connection probabilities can lead to larger degree heterogeneity. In order to archive this, one could for example assign four weights $w^-_{i,1},w^+_{i,1}$ and $w^-_{i,2},w^+_{i,2}$ to each holding $i$, and assume that the probability that subsidiary $l$ of holding $i$ has exposure to subsidiary $l$ of holding $j$ is given by $(w^-_{i,l}w^+_{j,l})/n$. Subsidiaries with large weights will then have a larger tendency to have exposures. 
    \item Trading between subsidiaries of different types. While we wanted to focus here sorely on the impact of the coupling through the holding, one could also allow for direct trading between subsidiaries of different types. 
\end{itemize}

\section{Resilience}\label{resillience}
In this section we consider how small infections can propagate through a financial network and amplify to lead to significant damage. For this we start with an a-priory uninfected network, parameterized by $(C_1,C_2)$ with $\P((C_1,C_2)\in \mathbf{D}_{J,1}\cup \mathbf{D}_{J,2})=0$ and then apply ex-post infections. The ex-post infections will lead to $(\tilde{C}_1,\tilde{C}_2)$, a network with lower capitalized banks, and we determine the spread of contagion in this new network based on the characteristics of $(C_1,C_2)$. This provides a measure of resilience of the system $(C_1,C_2)$ to shocks.

Since our results hold for both, type $A$ and type $B$ holdings, and the proofs are generic, we drop the holding type $J \in \{A, B \}$ in the notation, and since
we will work with different financial systems we now include the random variable $(C_1,C_2)$ in the notation of the functions $f_l$ defined in (\ref{eq:f-a}) and write $f_l(C_1,C_2;z_1,z_2),~l\in\{1,2\}$. 
Before we move on, we collect some additional properties of these functions which we need in the following.

\begin{proposition}\label{representationf:P}
Let $(C_1,C_2)$ be a configuration for the financial system and let $f_l(C_1,C_2;\cdot), l\in \{1,2 \}$ be its functional. Let further $\Lambda^l_{n,h} (z_1 p_1 , z_2 p_2 ):= \P ((n,h)- (X_1(p_1 z_1),X_2(p_2 z_2)) \in \mathbf{D}_{J,l})$ with $X_1$ and $X_2$ mutually independent Poisson distributed random variables with parameter $p_1 z_1$ and $p_2 z_2$ respectively. Then, 
\begin{equation}\label{representation:f}
f_l(C_1,C_2;z_1,z_2)= \E [\Lambda^l_{C_1,C_2} (z_1 p_1 , z_2 p_2 )] -z_l.
\end{equation}
Moreover, by the law of total probability
\begin{equation}\label{representation:f:P}
f_l(C_1,C_2;z_1,z_2)= \P \left( (C_1,C_2)-(X_1(p_1 z_1),X_2(p_2 z_2)) \in \mathbf{D}_{J,l}\right)-z_l,
   \end{equation}
    with $X_1(p_1 z_1)$ and $X_2(p_2 z_2)$ mutually independent and independent of $(C_1,C_2)$ Poisson distributed random variables with parameter $p_1 z_1$ and $p_2 z_2$.
\end{proposition}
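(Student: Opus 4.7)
The plan is to start from the definitions \eqref{eq:a-jk} and \eqref{eq:f-a} and re-interpret the sum $\sum_{(j,k)\in\mathbf{D}_{J,l}}a_{j,k}(z_1,z_2)$ probabilistically by recognizing the Poisson weights as a joint law. Concretely, let $X_1=X_1(p_1z_1)$ and $X_2=X_2(p_2z_2)$ be independent Poisson variables, independent of $(C_1,C_2)$. Then by the definition of $\psi_{i,j}$ stated just before \eqref{eq:a-jk} we have, for nonnegative integers $i,j$,
\[
\psi_{i,j}(p_1z_1,p_2z_2)=\P(X_1=i,\,X_2=j).
\]
Conditioning on $(C_1,C_2)$, this identifies the integrand of $a_{j,k}$ as a conditional probability:
\[
\1_{\{(C_1,C_2)\ge(j,k)\}}\,\psi_{C_1-j,C_2-k}(p_1z_1,p_2z_2)=\P\!\bigl((C_1,C_2)-(X_1,X_2)=(j,k)\,\big|\,C_1,C_2\bigr),
\]
where the indicator is automatic because $X_1,X_2\ge 0$.

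With this identification in hand, the first step is to sum over $(j,k)\in \mathbf{D}_{J,l}$ inside the expectation. Since the events $\{(C_1,C_2)-(X_1,X_2)=(j,k)\}$ are disjoint as $(j,k)$ ranges over $\mathbf{D}\supset \mathbf{D}_{J,l}$, the sum collapses to
\[
\sum_{(j,k)\in\mathbf{D}_{J,l}}a_{j,k}(z_1,z_2)=\E\!\left[\P\bigl((C_1,C_2)-(X_1,X_2)\in\mathbf{D}_{J,l}\,\big|\,C_1,C_2\bigr)\right]=\P\bigl((C_1,C_2)-(X_1,X_2)\in\mathbf{D}_{J,l}\bigr).
\]
Subtracting $z_l$ and using \eqref{eq:f-a} then yields the second representation \eqref{representation:f:P} immediately.

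Next, for \eqref{representation:f}, I would just condition on $(C_1,C_2)$ in the probability above. By independence of $(X_1,X_2)$ and $(C_1,C_2)$,
\[
\P\bigl((C_1,C_2)-(X_1,X_2)\in\mathbf{D}_{J,l}\,\big|\,(C_1,C_2)=(n,h)\bigr)=\P\bigl((n,h)-(X_1,X_2)\in\mathbf{D}_{J,l}\bigr)=\Lambda^l_{n,h}(p_1z_1,p_2z_2),
\]
so taking expectations over $(C_1,C_2)$ gives $\E[\Lambda^l_{C_1,C_2}(p_1z_1,p_2z_2)]$ and hence \eqref{representation:f}. There is no real obstacle here; the only point that requires a brief comment is the interchange of the (finite, since $\mathbf{D}$ is finite by the assumption that capitals are bounded by $R$) sum and expectation, which is trivial. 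The essence of the proof is simply recognizing that $a_{j,k}(z_1,z_2)$ is the joint probability that the shocked capital vector lands exactly at $(j,k)$.
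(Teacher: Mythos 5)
Your proposal is correct, and it rests on exactly the same key observation as the paper's proof: since $\psi_{i,j}(p_1z_1,p_2z_2)=\P(X_1=i,X_2=j)$, the quantity $a_{j,k}(z_1,z_2)$ defined in (\ref{eq:a-jk}) is nothing but the unconditional point probability $\P\bigl((C_1,C_2)-(X_1,X_2)=(j,k)\bigr)$, so that summing over the default region gives the probability of landing in it. Where you differ is in the packaging. The paper first expands $a_{j,k}$ by conditioning on $(C_1,C_2)=(n,h)$, interchanges the order of the two finite sums, and then splits the outer sum into $(n,h)\in\mathbf{D}_{J,l}$ versus $(n,h)\in\mathbf{D}^c_{J,l}$; for the first block it argues that the inner constraint $(j,k)\in\mathbf{D}_{J,l}$ can be dropped, invoking the downward-closedness of the default region, and thereby extracts an explicit $\P((C_1,C_2)\in\mathbf{D}_{J,l})$ term (cf. (\ref{poisson:2})). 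That extra step is a little delicate when $x>-\infty$, because $\mathbf{D}_{J,l}$ is bounded below by $x$ in each coordinate whereas the Poisson mass extends to all $(j,k)\le(n,h)$, so "dropping the condition" has to be read as summing only over $(j,k)\in\mathbf{D}$ if one wants the result to match the literal statement (\ref{representation:f:P}). Your direct summation sidesteps this entirely: the $a_{j,k}$ are disjoint point masses of the law of $(C_1,C_2)-(X_1,X_2)$, and (\ref{representation:f:P}) follows immediately with no case split and no appeal to the geometry of the default region. You also prove (\ref{representation:f:P}) first and then obtain (\ref{representation:f}) by conditioning on $(C_1,C_2)$, whereas the paper proceeds in the opposite order; both directions are trivial once the probabilistic identification is made. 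In short: same idea, but your write-up is cleaner and avoids a step that needs care.
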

With the help of the previous proposition it becomes clear from the representation of $f_1,~ f_2$ that a larger default region leads to a larger number of defaulted subsidiaries. In particular in the case of type $B$ 
holdings, a decrease of $x$ leads to fewer infections. Of course this observation can also be made directly from the specification of the process in Section~\ref{model}, but in order to derive our results in this section the representations provided in Proposition~\ref{representationf:P} will turn out useful. 

As a direct consequence of the last proposition we obtain the following two corollaries:
\begin{corollary}\label{corollary:f:cont:in:C}
Let $\{(C^n_1,C^n_2)\}_{n=1}^\infty$ be a sequence of random variables on $\mathbf{D}$ that converges weakly to $(C_1,C_2)$. Then 
$f_l(C^n_1,C^n_2;\cdot,\cdot)$ and $D_{\bm{v}} f_l (C^n_1,C^n_2;\bm{z}) $ converge to $f_l(C_1,C_2;\cdot,\cdot)$ and $D_{\bm{v}} f_l (C_1,C_2;\bm{z}) $, $l\in\{1,2\}$  for all $\bm{v} \in \mathbb{R}^2$, uniformly on compacts. 
\end{corollary}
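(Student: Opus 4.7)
The plan is to exploit the representation from Proposition~\ref{representationf:P} together with the fact that the domain $\mathbf{D}$ is \emph{finite}. This reduces weak convergence on $\mathbf{D}$ to pointwise convergence of the probability mass functions, i.e.\ $\mathbb{P}((C^n_1,C^n_2)=(m,h)) \to \mathbb{P}((C_1,C_2)=(m,h))$ for every $(m,h)\in\mathbf{D}$. Since both formulas for $f_l$ and for its directional derivative are finite sums of continuous functions of $(z_1,z_2)$ whose only dependence on the capitalization is through the law of $(C_1,C_2)$, uniform convergence on compacts will then follow from a straightforward three-epsilon argument.

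Concretely, first I would rewrite $f_l$ as a finite weighted sum. Using \eqref{representation:f},
\[
f_l(C^n_1,C^n_2;z_1,z_2) \;=\; \sum_{(m,h)\in\mathbf{D}} \mathbb{P}\big((C^n_1,C^n_2)=(m,h)\big)\, \Lambda^l_{m,h}(p_1 z_1, p_2 z_2) \;-\; z_l,
\]
and analogously for $(C_1,C_2)$. The maps $(z_1,z_2)\mapsto \Lambda^l_{m,h}(p_1 z_1, p_2 z_2)$ are polynomials in Poisson weights of the form $\psi_{i,j}(p_1 z_1, p_2 z_2)$; in particular they are continuous and uniformly bounded by $1$ on all of $\mathbb{R}_+^2$, and hence uniformly bounded on any compact $K\subset \mathbb{R}_+^2$. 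The difference $f_l(C^n_1,C^n_2;z_1,z_2) - f_l(C_1,C_2;z_1,z_2)$ is therefore bounded in absolute value by
\[
\sum_{(m,h)\in\mathbf{D}} \big|\mathbb{P}((C^n_1,C^n_2)=(m,h)) - \mathbb{P}((C_1,C_2)=(m,h))\big|,
\]
which is independent of $(z_1,z_2)$ and tends to $0$ as $n\to\infty$ by weak convergence on the finite set $\mathbf{D}$. This yields uniform convergence of $f_l(C^n_1,C^n_2;\cdot,\cdot)$ on $K$ (in fact uniformly on all of $\mathbb{R}_+^2$).

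For the directional derivatives, I would start from \eqref{directional:der} and expand $a_{j,k}$ using \eqref{eq:a-jk} to obtain the finite representation
\[
a_{j,k}(z_1,z_2) \;=\; \sum_{\substack{(m,h)\in\mathbf{D} \\ (m,h)\ge (j,k)}} \mathbb{P}((C_1,C_2)=(m,h))\, \psi_{m-j,h-k}(p_1 z_1, p_2 z_2),
\]
so that $D_{\bm v} f_l(C_1,C_2;\bm z)$ is again a finite sum, indexed over the finite boundary sets $\partial \mathbf{D}_{J^{lm}}$ and $\mathbf{D}$, of continuous $(z_1,z_2)$-functions weighted by $\mathbb{P}((C_1,C_2)=(m,h))$. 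On any compact $K\subset\mathbb{R}_+^2$ each Poisson weight is bounded by some constant $M_K$, so the same telescoping estimate as above, with an extra factor $|v_1|p_1 + |v_2|p_2$ from $\bm v$, gives
\[
\sup_{\bm z\in K}\big|D_{\bm v} f_l(C^n_1,C^n_2;\bm z) - D_{\bm v} f_l(C_1,C_2;\bm z)\big| \;\le\; C(\bm v, K)\sum_{(m,h)\in\mathbf{D}} \big|\mathbb{P}((C^n_1,C^n_2)=(m,h)) - \mathbb{P}((C_1,C_2)=(m,h))\big|,
\]
which again vanishes as $n\to\infty$.

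Because $\mathbf{D}$ is finite, there is essentially no analytic obstacle here; the argument is purely bookkeeping. The only point worth being careful about is the continuity statement for $f_l$ (and $D_{\bm v} f_l$) in $\bm z$ itself, which follows because each $\psi_{i,j}(p_1 z_1, p_2 z_2)$ is smooth and the sums are finite. Continuity of $f_l$ is in any case already part of Proposition~\ref{properties:f}.
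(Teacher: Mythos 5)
Your proof is correct, but it takes a genuinely different route from the paper's. The paper argues in two steps: first pointwise convergence (from boundedness of the integrand $\Lambda^l_{C_1,C_2}\le 1$ in representation \eqref{representation:f} together with weak convergence), then an upgrade to uniform convergence on compacts via a finite $\delta/2$-net and a Lipschitz bound on $f_l$ coming from \eqref{derivative} (the partial derivatives are bounded by $1+\max\{p_1,p_2\}$). You instead observe that $\mathbf{D}$ is a \emph{finite} set, so that weak convergence is the same as pointwise (and hence $\ell^1$) convergence of the probability mass functions, and the estimate
\[
\sup_{\bm z}\,\bigl|f_l(C^n_1,C^n_2;\bm z)-f_l(C_1,C_2;\bm z)\bigr| \le \sum_{(m,h)\in\mathbf{D}}\bigl|\P((C^n_1,C^n_2)=(m,h))-\P((C_1,C_2)=(m,h))\bigr|
\]
is then completely $\bm z$-free and goes to zero. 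This bypasses the net/Lipschitz machinery entirely and even yields uniform convergence on all of $\mathbb{R}_+^2$, not merely on compacts, which is a slightly stronger conclusion than stated. The trade-off is that your argument leans decisively on the finiteness of $\mathbf{D}$; the paper's covering argument would survive an unbounded state space (e.g.\ the $x=-\infty$ variant mentioned in Section~\ref{model}) so long as the integrand stays bounded by $1$ and the Lipschitz estimate holds, whereas your telescoping sum would then need an extra dominated-convergence step. One small inaccuracy in your write-up: you introduce a compact-dependent bound $M_K$ for the Poisson weights, but $\psi_{i,j}\le 1$ everywhere, so no $K$-dependence is needed there; also the constant $C(\bm v,K)$ in your derivative estimate should absorb the cardinalities $|\d\mathbf{D}_{J^{l1}}|$ and $|\d\mathbf{D}_{J^{l2}}|$ (finite by finiteness of $\mathbf{D}$), not just the factor $|v_1|p_1+|v_2|p_2$. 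Neither issue affects correctness.
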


\begin{corollary}\label{monotonicity:f}
Let $(C_1,C_2)$ and $(\tilde{C}_1,\tilde{C}_2)$ be configurations for the financial system and let $f_l(C_1,C_2;\cdot)$ respective $f_l(\tilde{C}_1,\tilde{C}_2;z_1,z_2), l\in \{1,2 \}$ be their functional.
Then:
\begin{enumerate}
\item If $\P ((\tilde{C}_1,\tilde{C}_2) \leq (j,k))\geq \P ((C_1,C_2) \leq (j,k))$ for all $(j,k)\in \mathbf{D}$, that is  $(C_1,C_2)$ stochastically dominates $(\tilde{C}_1,\tilde{C}_2)$, then $$f_l(C_1,C_2;z_1,z_2) \leq f_l(\tilde{C}_1,\tilde{C}_2;z_1,z_2)$$ for all $z_1,z_2\geq 0$. 
\item Moreover, denote by $\mathbf{B}_{J,l}$ the set of entrance points to the default region $\mathbf{D}_{J,l}$, that is 
\begin{align}
& \mathbf{B}_{A,1} = \{ (j,k)\in \D_A \vert j = x,k\geq -x \mbox{ or }  j+k =  0, x \le k\le -x  \mbox{ or } j = -x, k\le x\},\\
& \mathbf{B}_{A,2} =\{ (j,k)\in \D_A \vert k = x,j\geq -x \mbox{ or }  j+k =  0, x \le j\le -x  \mbox{ or } j = -x, j\le x\},\\
& \mathbf{B}_{B,1}= \{ (j,k)\in \D_B \vert j = x,k\geq -x \mbox{ or }  j+k =  0, 0 \le k\le -x  \mbox{ or } j = 0, k\le 0\},\\
& \mathbf{B}_{B,2}= \{ (j,k)\in \D_B \vert k = x,k\geq -x \mbox{ or }  j+k =  0, 0 \le j\le -x  \mbox{ or } j = 0, j\le 0\}.
\end{align}
If in addition there exists $(j,k)\in \mathbf{D}^c_{J,l}\cup \mathbf{B}_{J,l}$ such that $\P ((\tilde{C}_1,\tilde{C}_2) \leq (j,k)) > \P ((C_1,C_2) \leq (j,k))$, then $$f_l(C_1,C_2;z_1,z_2) < f_l(\tilde{C}_1,\tilde{C}_2;z_1,z_2)$$ for all $z_1,z_2\geq 0$.
\end{enumerate}
\end{corollary}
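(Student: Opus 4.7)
The plan is to base both assertions on the probabilistic representation from Proposition~\ref{representationf:P},
\[
f_l(C_1,C_2;z_1,z_2) = \P\bigl((C_1,C_2) - (X_1,X_2) \in \mathbf{D}_{J,l}\bigr) - z_l,
\]
where $X_1,X_2$ are independent Poisson variables with parameters $p_1z_1,p_2z_2$, independent of $(C_1,C_2)$. The first structural step is to verify directly from the explicit formulas that each of $\mathbf{D}_{A,l},\mathbf{D}_{B,l}$ for $l\in\{1,2\}$ is downward closed under the coordinatewise partial order on $\mathbf{D}$: every defining condition ($j\leq x$, $k\leq x$, or $j+k\leq 0$ together with the auxiliary bound on the other coordinate) is preserved under a coordinatewise decrease. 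Consequently, for any fixed outcome $(x_1,x_2)$ of the Poisson shock the map $(n,h)\mapsto \1_{\{(n-x_1,h-x_2)\in \mathbf{D}_{J,l}\}}$ is non-increasing in each argument, and averaging over the shock preserves this monotonicity, giving that $\Lambda^l_{n,h}(p_1z_1,p_2z_2)$ is non-increasing in $(n,h)$.

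\textbf{Assertion (1).} Using the stochastic dominance hypothesis I couple $(C_1,C_2)$ and $(\tilde C_1,\tilde C_2)$ on a common probability space so that $(\tilde C_1,\tilde C_2)\leq (C_1,C_2)$ coordinatewise almost surely, and draw the Poisson shocks $(X_1,X_2)$ independently of this pair. Under this coupling, downward closedness of $\mathbf{D}_{J,l}$ gives
\[
\1_{\{(C_1-X_1,C_2-X_2)\in \mathbf{D}_{J,l}\}} \leq \1_{\{(\tilde C_1-X_1,\tilde C_2-X_2)\in \mathbf{D}_{J,l}\}} \quad \text{a.s.,}
\]
and taking expectations and subtracting $z_l$ on both sides yields the inequality in (1).

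\textbf{Assertion (2) and the main obstacle.} For part~(2) I aim to exhibit a positive-probability event on which the above indicator inequality is strict. The choice $(j,k) \in \mathbf{D}^c_{J,l}\cup \mathbf{B}_{J,l}$ is tailored to this: if $(j,k)\in \mathbf{B}_{J,l}\subset \mathbf{D}_{J,l}$, then by downward closedness the full lower orthant $\{(j',k')\leq(j,k)\}$ is contained in $\mathbf{D}_{J,l}$, so the extra CDF mass of $\tilde C$ over $C$ below $(j,k)$ already sits inside the default region and the Poisson event $(X_1,X_2)=(0,0)$, which has positive probability, delivers a strict gap. If instead $(j,k)\in \mathbf{D}^c_{J,l}$, I identify a specific Poisson outcome $(x_1^\ast,x_2^\ast)$ for which the shifted orthant $\{(j',k')\leq(j,k)\}-(x_1^\ast,x_2^\ast)$ lies inside $\mathbf{D}_{J,l}$, while the points above $(j,k)$ that carry the compensating $C$-mass still land outside $\mathbf{D}_{J,l}$ after the same shift; multiplying by the (positive) probability of this shock then yields a strict inequality. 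The routine parts of the proof are the downward closedness and the coupling in assertion~(1). The delicate step is the strict inequality: one must translate a pointwise strict inequality of the lower orthant CDFs into a strictly larger Poisson-smoothed default probability, and the geometric role of $\mathbf{B}_{J,l}$ as the set of minimal entrance points of $\mathbf{D}_{J,l}$ is precisely what enables this translation, after a short case-by-case check across $J\in\{A,B\}$ and $l\in\{1,2\}$.
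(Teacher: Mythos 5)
Your plan correctly identifies the probabilistic representation from Proposition~\ref{representationf:P} and the downward closedness of $\mathbf{D}_{J,l}$ as the structural pillars, and this matches the spirit of the paper's argument. However, there is a genuine gap at the very first step of Assertion~(1): you cannot in general couple $(C_1,C_2)$ and $(\tilde C_1,\tilde C_2)$ so that $(\tilde C_1,\tilde C_2)\leq(C_1,C_2)$ coordinatewise a.s.\ from the stated hypothesis. The hypothesis $\P((\tilde C_1,\tilde C_2)\leq(j,k))\geq\P((C_1,C_2)\leq(j,k))$ for all $(j,k)$ is the \emph{lower orthant} order, and in dimension two this is \emph{strictly weaker} than the usual stochastic order, which (by Strassen's theorem) is what is equivalent to the existence of the a.s.\ monotone coupling you invoke. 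Concretely, with $x=-1$ take $(C_1,C_2)$ uniform on $\{(0,0),(1,-1)\}$ and $(\tilde C_1,\tilde C_2)$ uniform on $\{(1,0),(0,-1)\}$: one checks $\tilde F\geq F$ pointwise, yet there is no coupling with $\tilde C\leq C$ a.s.\ and in fact $\P((C_1,C_2)\in\mathbf D_{A,1})=1>1/2=\P((\tilde C_1,\tilde C_2)\in\mathbf D_{A,1})$, so the claimed inequality $f_1(C_1,C_2;0,0)\leq f_1(\tilde C_1,\tilde C_2;0,0)$ fails. So your coupling step is a genuine gap, not a matter of rigor.

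That said, the paper's own proof has the same implicit problem: it cites only that $\Lambda^l_{j,k}$ is coordinatewise non-increasing and concludes the integral inequality, but lower orthant dominance does not yield $\E[g(C)]\leq\E[g(\tilde C)]$ for arbitrary non-increasing $g$; one needs $g$ to be a non-negative mixture of lower-orthant indicators, and $\Lambda$ — being a mixture of shifted-$\mathbf D_{J,l}$ indicators, which are lower sets but not lower orthants — is not of that form. The statement is true, and both your proof and the paper's proof do work, under the stronger hypothesis that the pointwise coupling $\P((\tilde C_1,\tilde C_2)\leq(C_1,C_2))=1$ exists, which is precisely what is assumed in the only place the corollary is invoked (the non-resilience argument in Theorem~\ref{resilience}). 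You should either strengthen the hypothesis to the usual stochastic order (or the explicit pointwise domination), or restrict attention to the case arising in the application. For Assertion~(2), your sketch also needs tightening: the Poisson-shock-zero argument for $(j,k)\in\mathbf B_{J,l}$ shows $\tilde C\in\mathbf D_{J,l}$ on a positive-probability event but does not show $C\notin\mathbf D_{J,l}$ there, and the paper's own proof concedes the strict inequality of $\Lambda$ only for $(z_1,z_2)\neq(0,0)$, so the conclusion should be read as excluding the origin.
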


We now move on to the statement of our main result in this section Theorem~\ref{resilience} and first specify what is exactly meant by a shock to the system. 

{\bf Specification of financial shock:} Following previous literature on contagion in financial networks \cite{Cont2016,detering2019managing,deteringDefaultFireSales,detering2020suffocating} we now consider a network that has a-priory no defaults, meaning $\P((C_1,C_2)\in \mathbf{D}_{J,1}\cup \mathbf{D}_{J,2})=0$. 
We then apply a small shock of size $\varepsilon$ to the system where $\varepsilon$ denotes the fraction of holdings affected by the shock. This shock has the effect that the capital of some subsidiaries is changed (i.e. $\P ((\tilde{C}_1,\tilde{C}_2) \neq(C_1,C_2))= \varepsilon >0$)), and we assume that:
\begin{enumerate}
    \item $\P ((\tilde{C}_1,\tilde{C}_2) \leq (C_1,C_2))=1$ (no one is better off after the shock),
\item $\P (\tilde{C}_1 ,\tilde{C}_2)\in \mathbf{D}_{J,1} \cup \mathbf{D}_{J,2} )>0$ (some defaults occur due to the shock).
\end{enumerate}
We call $(\tilde{C}_1,\tilde{C}_2)$ with the above properties the shocked system $(\tilde{C}_1,\tilde{C}_2)$ of $(C_1,C_2)$.

{\bf Resilience:} We now call a financial system described by $(C_1,C_2)$ {\em resilient} if for every given $\tilde{\varepsilon}>0$, there exists $\varepsilon$ such that for all shocked systems $(\tilde{C}_1,\tilde{C}_2)$ of $(C_1,C_2)$ with $\P ((\tilde{C}_1,\tilde{C}_2) \neq(C_1,C_2))\leq \varepsilon$, the number of defaulted subsidiaries $n^{-1}(\mathcal{S}_{1,2n-1}+\mathcal{S}_{2,2n-1})$ at the end of the default processes is less then $\tilde{\varepsilon}$ w.h.p.. 

If on the contrary, there exists a lower bound $\Delta$, such that for every shocked system $(\tilde{C}_1 ,\tilde{C}_2)$ of $(C_1,C_2)$, it holds that $n^{-1}(\mathcal{S}_{1,2n-1}+\mathcal{S}_{2,2n-1})\geq \Delta$ w.h.p., then we call the system {\em non-resilient}.

The following theorem gives criteria for resilience and non-resilience based on the derivative at ${\bf{0}}$ of the functional of the uninfected network, parametrized by $(C_1,C_2)$. 
\begin{theorem}\label{resilience}
We have the following statement regarding resilience:
\begin{enumerate}
    \item If there exists a vector $\bf{v} \in \mathbb{R}^2_+$ such that $D_{\bf{v}} f_1 (C_1,C_2;{\bf{0}}), D_{\bf{v}} f_2 (C_1,C_2;{\bf{0}})  <0$, then the network is resilient.
    \item If there exists a vector $\bf{v} \in \mathbb{R}^2_+$ such that $D_{\bf{v}} f_1 (C_1,C_2;{\bf{0}}), D_{\bf{v}} f_2 (C_1,C_2;{\bf{0}})  >0$, then the network is non-resilient.
\end{enumerate}
In particular the conditions 1. and 2. are mutually exclusive.
\end{theorem}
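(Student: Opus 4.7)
The plan is to apply Theorem~\ref{thm:threshold:model} to the shocked system $(\tilde{C}_1,\tilde{C}_2)$, whose terminal contagion is driven by the smallest joint positive root of $\tilde{f}_l(\mathbf{z}):=f_l(\tilde{C}_1,\tilde{C}_2;\mathbf{z})$. In both parts the strategy is to translate the sign of the directional derivative $D_\mathbf{v} f_l(C_1,C_2;\mathbf{0})$ of the unshocked functional into precise control of this root, via Corollaries~\ref{corollary:f:cont:in:C} and~\ref{monotonicity:f}.

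For Part~1, I would first pick $\eta,r>0$ so that $D_\mathbf{v} f_l(C_1,C_2;\mathbf{z})<-\eta$ uniformly on $\overline{B_r(\mathbf{0})}\cap\mathbb{R}^2_+$, and use Corollary~\ref{corollary:f:cont:in:C} to extend this to $D_\mathbf{v}\tilde{f}_l(\mathbf{z})<-\eta/2$ on the same set for every shock of size $\varepsilon$ sufficiently small. Because $\tilde{f}_l(\mathbf{0})=\mathbb{P}((\tilde{C}_1,\tilde{C}_2)\in\mathbf{D}_{J,l})\leq\varepsilon$ and $\tilde{f}_l$ decreases at rate at least $\eta/2$ along every ray parallel to $\mathbf{v}$, a mean-value argument forces both $\tilde{f}_l$ to become strictly negative within distance $O(\varepsilon)$ of $\mathbf{0}$. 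Combining this with the cross-monotonicity of $f$ from Proposition~\ref{properties:f}(2), a fixed-point / degree argument on the rectangle $[0,O(\varepsilon)]^2$ extracts a joint zero $\hat{z}(\varepsilon)\in B_{C\varepsilon}(\mathbf{0})$ for some constant $C$. The descent hypothesis in Theorem~\ref{thm:threshold:model}(2) is inherited at $\hat{z}(\varepsilon)$ by continuity, so $n^{-1}|\mathcal{S}_{l,2n-1}|\xrightarrow{p}\hat{z}_l(\varepsilon)=O(\varepsilon)$, and choosing $\varepsilon$ small enough yields resilience.

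For Part~2, I would invoke Corollary~\ref{monotonicity:f} to obtain $\tilde{f}_l\geq f_l$ pointwise, so that any joint positive root $\hat{z}$ of $\tilde{f}$ necessarily lies in the sub-level set $\{f_1\leq 0\}\cap\{f_2\leq 0\}$. The hypothesis $D_\mathbf{v} f_l(\mathbf{0})>0$ combined with continuity delivers $\eta',\Delta>0$ with $D_\mathbf{v} f_l(\mathbf{z})>\eta'$ throughout $\overline{B_\Delta(\mathbf{0})}$, hence $f_l(\mathbf{z})>0$ along a wedge around $\mathbf{v}$. Using in addition the cross-monotonicity of $f$ from Proposition~\ref{properties:f}(2), a careful analysis of the cones $\{\nabla f_l(\mathbf{0})\cdot\mathbf{z}\leq 0\}\cap\mathbb{R}^2_+$ then shows that $\{f_1\leq 0\}\cap\{f_2\leq 0\}\cap\overline{B_\Delta(\mathbf{0})}\cap\mathbb{R}^2_+$ collapses to $\{\mathbf{0}\}$. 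Every admissible shock satisfies $\tilde{f}_l(\mathbf{0})>0$ for some $l$, so $\hat{z}\neq\mathbf{0}$ and therefore $\|\hat{z}\|\geq\Delta$ uniformly. Theorem~\ref{thm:threshold:model}(1) then gives $n^{-1}(|\mathcal{S}_{1,2n-1}|+|\mathcal{S}_{2,2n-1}|)\geq\Delta-o(1)$ with high probability, yielding non-resilience.

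The hardest step in both parts is the multi-dimensional root control away from the line $t\mathbf{v}$: Part~1 requires extracting a genuine joint zero rather than two separate scalar zeros on that line, which is handled by the degree argument exploiting cross-monotonicity; Part~2 requires ruling out joint roots of $\tilde{f}$ near $\mathbf{0}$ in directions where the linearization of $f$ degenerates, which is handled by the cone analysis sketched above. Mutual exclusion of the two conditions is immediate from the conclusions themselves, since Part~1 produces contagion of order $\varepsilon\to 0$ while Part~2 produces contagion bounded below by a fixed $\Delta>0$, and the two cannot coexist.
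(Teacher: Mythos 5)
Your proposal is correct and follows essentially the same architecture as the paper's proof: stability of the functionals $f_l(\cdot;\cdot)$ and their directional derivatives under small shocks (Corollary~\ref{corollary:f:cont:in:C}), transfer of monotonicity under stochastic domination (Corollary~\ref{monotonicity:f}), control of the smallest joint zero of $f_l(\tilde{C}_1,\tilde{C}_2;\cdot)$, and finally Theorem~\ref{thm:threshold:model} to convert this into an asymptotic default fraction. You also correctly isolate the genuinely delicate step in each part. The one real divergence is in Part~1, where you extract the joint zero via a Poincar\'e--Miranda / degree argument on the rectangle $[0,(\varepsilon/2)v_1]\times[0,(\varepsilon/2)v_2]$, the cross-monotonicity from Proposition~\ref{properties:f} supplying exactly the required sign pattern on its edges; the paper instead packages this step into Lemma~\ref{upper:bound}, which produces the joint zero as the limit of a componentwise alternating Gauss--Seidel descent chasing zeros of $f_1$ in $z_1$ and $f_2$ in $z_2$. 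The two devices are interchangeable and both rest on the same cross-monotonicity. For Part~2, your appeal to the linearization cones $\{\nabla f_l(\bm{0})\cdot\bm{z}\leq 0\}$ is a detour the paper avoids: the paper establishes $f_1,f_2>0$ directly on the ray segment $\{t\mathbf{v}:0<t\le\delta\}$ from the uniform lower bound on $D_{\mathbf{v}}f_l$ in a ball, and then pushes any point of the box $(\bm{0},\delta\mathbf{v}]$ onto that segment by a horizontal or vertical projection, invoking the monotonicity of $f_2$ in $z_1$ (respectively $f_1$ in $z_2$). This sidesteps precisely the issue you flag of controlling higher-order corrections to the linearization near its degeneracy directions; without the projection step, the cone picture alone would not close. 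Your closing observation on mutual exclusion matches the paper's implicit reasoning.
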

Let us provide some rough sketch of the proof of Theorem~\ref{resilience}, which is fully worked out in the Appendix: We know by Proposition~\ref{properties:f} that for the network without defaults it holds that $f_1 (C_1,C_2;{\bf{0}})=f_2 (C_1,C_2;{\bf{0}})=0$. If condition $1.$ holds and there exists a vector $\bf{v}$ such that the directional derivative of both functions $f_1(C_1,C_2,\cdot)$ and $f_2 (C_1,C_2,\cdot)$ is negative, then this implies that in the direction $\bf{v}$ both functions become negative. As long as the shock is small, i.e. $\P ((\tilde{C}_1,\tilde{C}_2) \neq(C_1,C_2))=\varepsilon $ small, we obtain by Corollary~\ref{corollary:f:cont:in:C} that $f_l (C_1,C_2;z_1,z_2 )$ is close to $f_l (\tilde{C}_1,\tilde{C}_2;z_1,z_2 )$  and $D_{\bf{v}} f_l (C_1,C_2;z_1,z_2)$ is close to $D_{\bf{v}} f_l (\tilde{C}_1,\tilde{C}_2;z_1,z_2)$ for $l\in \{1,2 \}$ and $(z_1,z_2)$ in some compact set. Despite $\max \{f_1 (\tilde{C}_1,\tilde{C}_2;{\bf{0}}),f_2 (\tilde{C}_1,\tilde{C}_2;{\bf{0}})  \}>0$ this will imply that for small $\varepsilon$ there exists a small $t_0 $ such that $f_l (C_1,C_2;t_0 \bf{v}) \leq 0$ for $l\in \{ 1, 2\}$. The following Lemma~\ref{upper:bound} then shows that $t_0 \bf{v}$ serves as an upper bound for the first joint zero of $f_1 (\tilde{C}_1,\tilde{C}_2;\cdot )$ and $f_2 (\tilde{C}_1,\tilde{C}_2;\cdot)$ which then leads to the conclusion of resilience. 
\begin{lemma}\label{upper:bound}
Let $\tilde{\bm{z}}= (\tilde{z}_1,\tilde{z}_2)$ be such that $(f_1(\tilde{\bm{z}}) , f_2(\tilde{\bm{z}}) ) \leq \bm{0}$, then it holds that $\hat{\bm{z}} \leq \tilde{\bm{z}}$.
\end{lemma}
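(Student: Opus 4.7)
The plan is to exploit the probabilistic representation of $f_l$ given in Proposition~\ref{representationf:P}, namely
\[
f_l(z_1,z_2) \;=\; g_l(z_1,z_2) - z_l, \qquad g_l(z_1,z_2) := \P\bigl((C_1,C_2) - (X_1(p_1 z_1),X_2(p_2 z_2)) \in \mathbf{D}_{J,l}\bigr),
\]
and to recast the joint-zero condition as a fixed-point equation $T(\bm{z})=\bm{z}$ for the map $T(\bm{z}):=(g_1(\bm{z}),g_2(\bm{z}))$. Under this reformulation, the hypothesis $(f_1(\tilde{\bm{z}}),f_2(\tilde{\bm{z}}))\le\bm{0}$ becomes $T(\tilde{\bm{z}})\le \tilde{\bm{z}}$, and the conclusion $\hat{\bm{z}}\le\tilde{\bm{z}}$ will follow from a Knaster--Tarski style iteration.

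The key ingredient is monotonicity of $T$. By direct inspection of the four default regions $\mathbf{D}_{J,l}$, each is \emph{down-closed} under the componentwise order on $\mathbf{D}$: if $(j,k)\in \mathbf{D}_{J,l}$ and $(j',k')\le(j,k)$, then $(j',k')\in\mathbf{D}_{J,l}$ (in each case one of the defining inequalities $j\le x$, $k\le x$, or $j+k\le 0$ with a further one-sided constraint is preserved under decreasing coordinates). Since a Poisson law is stochastically monotone in its parameter, increasing $z_m$ makes $X_m(p_m z_m)$ stochastically larger, hence $(C_1,C_2)-(X_1(p_1 z_1),X_2(p_2 z_2))$ stochastically smaller, and the probability of landing in the down-closed set $\mathbf{D}_{J,l}$ therefore non-decreasing. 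Thus $g_l$, and so $T$, is componentwise monotone non-decreasing. Continuity of $T$ is immediate from smoothness of Poisson probabilities in the parameter.

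Next, I iterate from the origin: set $\bm{z}^{(0)}=\bm{0}$ and $\bm{z}^{(k+1)}=T(\bm{z}^{(k)})$. A simple induction using monotonicity of $T$ shows both that $\bm{z}^{(k)}\le\bm{z}^{(k+1)}$ (starting from $\bm{0}\le T(\bm{0})$) and that $\bm{z}^{(k)}\le\tilde{\bm{z}}$ (starting from $\bm{0}\le\tilde{\bm{z}}$ and using $T(\tilde{\bm{z}})\le\tilde{\bm{z}}$). Hence $\{\bm{z}^{(k)}\}$ is a non-decreasing sequence bounded above by $\tilde{\bm{z}}$, so it converges to some $\bm{z}^\ast\le\tilde{\bm{z}}$; by continuity of $T$, $T(\bm{z}^\ast)=\bm{z}^\ast$, i.e.\ $\bm{z}^\ast$ is a joint zero of $(f_1,f_2)$ with $\bm{z}^\ast\ge\bm{0}$.

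Finally, since $\hat{\bm{z}}$ is by Theorem~\ref{thm:threshold:model} the componentwise smallest positive joint zero, the minimality property yields $\hat{\bm{z}}\le\bm{z}^\ast\le\tilde{\bm{z}}$, as required. The only genuine technical point is verifying down-closedness of every $\mathbf{D}_{J,l}$ (which is a routine but case-by-case check from the explicit formulas for $\mathbf{D}_{A,l}$ and $\mathbf{D}_{B,l}$); once that is in hand the rest is standard monotone-iteration bookkeeping.
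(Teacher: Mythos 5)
Your proof is correct, but it takes a genuinely different route from the paper's. The paper performs a \emph{descending} coordinate-wise iteration starting at $\tilde{\bm z}$: set $\bm z_0=\tilde{\bm z}$ and alternately replace one coordinate by the largest root below it, i.e.\ $z_{1,n}=\sup\{t\le z_{1,n-1}: f_1(t,z_{2,n-1})=0\}$ on odd steps and $z_{2,n}=\sup\{t\le z_{2,n-1}: f_2(z_{1,n-1},t)=0\}$ on even steps; the only monotonicity this needs is the \emph{partial} monotonicity already stated in Proposition~\ref{properties:f} ($f_1$ nondecreasing in $z_2$, $f_2$ nondecreasing in $z_1$), together with $f_1(0,\cdot)\ge 0$, $f_2(\cdot,0)\ge 0$, and continuity. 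You instead isolate the map $T(\bm z)=(g_1(\bm z),g_2(\bm z))$ with $g_l=f_l+z_l$, prove that $T$ is monotone in \emph{both} arguments by observing that the default regions $\mathbf D_{J,l}$ are componentwise down-closed and that the Poisson family is stochastically increasing in its parameter, and then run an \emph{ascending} Picard iteration from $\bm 0$, bounding it throughout by $\tilde{\bm z}$ via $T(\tilde{\bm z})\le\tilde{\bm z}$. Your route pays a small extra price up front (the case-by-case check that each $\mathbf D_{J,l}$ is down-closed, a fact the paper uses implicitly in Proposition~\ref{representationf:P} but never needs in the form "full monotonicity of $g_l$''), but it buys something the paper's proof does not deliver: the limit $\bm z^\ast$ of your iteration is by construction the \emph{least} fixed point of $T$, so it coincides with $\hat{\bm z}$; the paper's descending sequence only produces \emph{some} joint zero $\overline{\bm z}\le\tilde{\bm z}$ and then appeals separately to minimality of $\hat{\bm z}$. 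Both are Knaster--Tarski-type arguments, run in opposite directions.
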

In order to show non-resilience we first observe that condition 2. implies that in the direction $\bf{v}$ the functions $f_1 (C_1,C_2;\cdot )$ and $f_2 (C_1,C_2;\cdot)$ are both becoming positive.  By continuity of the derivative this in particular implies that $\min \{ f_1 (C_1,C_2;t {\bf{v}} ), f_2 (C_1,C_2;t {\bf{v}}) \}>0$ for all $t \in (0,t_0]$ for some $t_0>0$. We can then use Corollary~\ref{monotonicity:f} to show that also $\min \{ f_1 (\tilde{C}_1,\tilde{C}_2;t {\bf{v}} ), f_2 (\tilde{C}_1,\tilde{C}_2;t {\bf{v}}) \}>0$ for all $t \in (0,t_0]$. Then one can use the properties of $f$ derived in Proposition~\ref{properties:f} to conclude that the smallest fixed-point of $f_1 (\tilde{C}_1,\tilde{C}_2; \cdot )$ and $f_2 (\tilde{C}_1,\tilde{C}_2;\cdot)$ is bounded by $t_0 \bf{v}$. Because the choice of $t_0$ only depends on the specification of $(C_1,C_2)$ and not on the precise specification of $(\tilde{C}_1,\tilde{C}_2)$, this allows us to conclude non-resilience.

In the following we provide some examples that demonstrate the effect that the holding support can have on the resilience/non-resilience of the network. 

\begin{example}\label{expl:res1}Consider a financial network with $\P ( ( C_1,C_2)=(1,1))=1$ and $p_i>1$ for $i\in \{ 1, 2 \}$. This represents a financial system with low overall capitalization. We assume that the holdings are of type $A$.
We first consider the decoupled system $x=0$. For this we obtain 
\begin{eqnarray}
D_{\bm{v}} f_1 (\bm{z})  &=&  - v_1   + v_1 p_1    \sum_{k\leq 1} a_{1,k}(\bm{z}) = - v_1   + v_1 p_1    \sum_{k\geq 0} \psi_{0,k} (p_1 z_1,p_2 z_2) = - v_1   + v_1 p_1 e^{-z_1 p_1} \nonumber \\
D_{\bm{v}} f_2 (\bm{z})  &=&  - v_2   + v_2 p_2    \sum_{k\leq 1} a_{k,1}(\bm{z})= - v_2   + v_2 p_2    \sum_{k\geq 0} \psi_{k,0} (p_1 z_1,p_2 z_2) =- v_2   + v_2 p_2 e^{-z_2 p_2} \nonumber
\end{eqnarray}
and therefore $D_{\bm{v}} f_l (\bm{0})= -v_l + v_l p_l$ for $l\in \{1,2 \}$, which is positive for any $(v_1,v_2) > \bm{0}$ and the system is therefore non-resilient. This is as expected, because the systems are decoupled ($x=0$) and the propagation of distress in one system is not influenced by the other system. It is well known (see for example \cite{Hofstad2014}) that each system $l\in \{1,2 \}$ separately forms a giant component for $p_l >1$. The capital of each subsidiary equals $1$ ($C_l=1$) and the subsidiary defaults as soon as one of their debtors default. This implies that the default of one subsidiary in the giant component triggers the default of all subsidiaries in the giant component. 

We now look at the coupled case $x\leq -1$ and note that because of the capital structure of the uninfected network, the area of the default region that can actually be reached does not depend on $x$ for $x\leq -1$. It is therefore sufficient to consider $x=-1$. We observe that 
$$
D_{\bm{v}} f_l (\bm{z})  =  - v_l   + v_1 p_1   ( a_{0,1}(\bm{z})  + a_{1,0}(\bm{z}) )+ v_2 p_2 ( a_{0,1}(\bm{z})  + a_{1,0}(\bm{z}) )
$$
with 
\begin{eqnarray}
a_{0,1}(z_1,z_2) &=& \psi_{1,0} (p_1 z_1,p_2 z_2) =e^{-p_1 z_1 }\frac{(p_1 z_1)^1}{1!} e^{-p_2 z_2 },\\
a_{1,0}(z_1,z_2) &=& \psi_{0,1} (p_1 z_1,p_2 z_2) = e^{-p_1 z_1 } e^{-p_2 z_2 }\frac{(p_2 z_2)^1}{1!},
\end{eqnarray}
and thus $a_{0,1}(\bm{0}) = a_{1,0}(\bm{0}) =0$. It follows that $D_{\bm{v}} f_l (\bm{0}) <(0,0)$ for any $\bm{v}=(v_1,v_2)> (0,0)$. The system is therefore resilient. The fact that the system is resilient for $x=-1$ might be surprising at first for the following reason. If a subsidiary defaults, then the default of the holding's other subsidiary is triggered immediately as well. This is because when the default happens the holdings capital is $0$. From this perspective the coupling can lead to many additional defaults as a result of a defaulted subsidiary. On the other hand however, it allows to contain second order effects as healthy holdings with both subsidiaries having capital equal to $1$ require a total of 2 links to defaulted subsidiaries in order to trigger the default of one and then automatically both subsidiaries. This example shows that it can be beneficial from a regulator perspective if entities form support groups. Although the distress of one member of the group can then pull down all the others in that group, the formation of support groups leads to a more resilient system if the shock is small. 
\end{example}

\begin{example}\label{ex:resil}
We consider now a network with $\P ((C_1,C_2)=(1,0))=1$, which means that for all banks their subsidiary $1$ would be in default if it was not supported by the holding through the other subsidiary. We assume that the holdings are of type $A$. This means that the financial network is clearly more prone to contagion than the one considered in Example~\ref{expl:res1}. Again, we only need to consider the case $x=-1$. Then
$$
D_{\bm{v}} f_l (\bm{z})  =  - v_l   + v_1 p_1   a_{0,1}(\bm{z}) + v_2 p_2 a_{0,1}(\bm{z})
$$
with 
\begin{eqnarray}
a_{0,1}(z_1,z_2) &=& \psi_{0,0} (p_1 z_1,p_2 z_2) =e^{-p_1 z_1 } e^{-p_2 z_2 }.\nonumber
\end{eqnarray}
Then, we obtain that
\begin{eqnarray}
D_{(1/2,1/2)} f_l (\bm{0}) &=& 1/2( p_1 -1)  + 1/2 p_2 \nonumber
\end{eqnarray}
which is $(>0)$ for $p_1 + p_2 >1$ and $(<0)$ for $p_1 + p_2 <1$.
It follows that the system is resilient for $p_1 + p_2 <1$ and non-resilient for $p_1 + p_2 >1$. Note that in this example it is clear from the discussion about the giant component that for $p_1,p_2>1$ the system must be non-resilient because one link from a defaulted subsidiary leads automatically to the default of both subsidiaries of the same holding. Because for $p_1,p_2>1$ each subsidiary network has its own giant component, it is clear that this network is more prone to large cascades than two separated networks with the same connection probability. For $2 > p_1 + p_2 >1$ and $p_1<1$ the coupling has than actually a negative effect for the type $1$ subsidiaries network. The effect of connecting to the under-capitalized type $2$ subsidiaries network is the same as adding additional second edges (loans) between the type $1$ subsidiaries with probability $p_2$. Subsidiary $1$ inherits the exposures of subsidiary $2$. Due to these additional edges, the network becomes non-resilient. It is not surprising that the additional edges have the same effect as increasing the connection probability, because for two nodes $i$ and $j$, the probability of having neither an edge between the type $1$ subsidiaries nor between the type $2$ subsidiaries is given by $1-(p_1+p_2)/n + (p_1 p_2)/n^2$ where the term $(p_1 p_2)/n^2$ becomes negligible for large $n$. 
\end{example}

\section{Case Study}\label{casestudy}
We now study the contagion process for a number of different system configurations and illustrate a few intuitive findings. This complements the examples given in the previous section which could be studied fully analytically. For different system configurations we determine the optimal support level. We wish to highlight that the optimal support level is viewed from the point of a (benevolent) regulator, who wishes to stabilize the financial system (by some measure), by imposing rules on the financial entities to support their own subsidiaries. These support actions may not be optimal from the point of view of an individual agent, and may not be a Nash equilibrium.

We assume that all subsidiaries of type $1$ have capital $r_1$ except for an $\varepsilon_1$ share which has capital $x$ and thus defaults. Similarly, all type $2$ subsidiaries have capital equal to $r_2$ except for an $\varepsilon_2$ share that has capital $x$ and thus defaults. Assuming that the defaulted subsidiaries of type $1$ are chosen independently of those of type $2$, this of course implies that $(C_1,C_2)$ is such that $\mathbb{P}(C_l=x)=\varepsilon_l=1-\mathbb{P}(C_l=r_l)$ for $l\in \{ 1,2 \}$ and $C_1$ and $C_2$ independent.   
Then we obtain that
\begin{eqnarray}
f^A_1 (z_1,z_2)&=&  \varepsilon_1 -z_1  + (1-\varepsilon_1) \Psi_{r_1-x} (p_1 z_1) \nonumber \\
&+&(1-\varepsilon_1) \left( (1-\varepsilon_2)  \sum_{j=x+1}^{-x-1}  \psi_{r_1 + j} (p_1 z_1 ) \Psi_{\max \{r_2- j,0\}} (p_2 z_2 )  \right. \nonumber \\
&+& \left. \varepsilon_2 \sum_{j=x}^{-x-1}  \psi_{r_1 + j} (p_1 z_1 )  \right), \nonumber 
\end{eqnarray}
and
\begin{eqnarray}
f^B_1 (z_1,z_2)&=&  \varepsilon_1 -z_1  + (1-\varepsilon_1) \Psi_{r_1-x} (p_1 z_1) \nonumber \\
&+&(1-\varepsilon_1) \left( (1-\varepsilon_2)  \sum_{j=1}^{-x-1}  \psi_{r_1 + j} (p_1 z_1 ) \Psi_{\max \{r_2- j,0\}} (p_2 z_2 )  \right. \nonumber \\
&+& \left. \varepsilon_2 \sum_{j=0}^{-x-1}  \psi_{r_1 + j} (p_1 z_1 )  \right). \nonumber 
\end{eqnarray}
The functions $f^A_2$ and $f^B_2$  are simply derived by exchanging the role of $p_1, z_1, r_1$ and $p_2, z_2, r_2$. Observe that for holdings of type $A$ and for $-x\geq r_1$ respectively $-x\geq r_2$ our specification implies that the default of one subsidiary triggers immediately the default of the other subsidiary of the same holding.

In our first example we choose the parameters $r_1=5$ and $r_2=9$ and $p_1=p_2=14$, so that both layers in the networks are equally connected. We assume that in both subnetworks 15\% of the subsidiaries are defaulted, i.e. $\varepsilon_1=\varepsilon_2=0.15$. We consider a holding type $A$. The only difference between the networks of subsidiaries of types $1$ and $2$ is the capitalization. The network of subsidiaries of type $2$ is better capitalized. Also note that for $x\leq -5$, each default of subsidiary $2$ immediately triggers the default of subsidiary $1$ of the same holding, which implies that in this case the actual fraction of initially defaulted type $1$ subsidiaries is equal to $0.15+0.85 \cdot 0.15=0.2775$. The same holds for the type $2$ subsidiaries if $-x\leq 9$.
\begin{figure}[ht!]
\begin{center}
\includegraphics[scale=0.6]{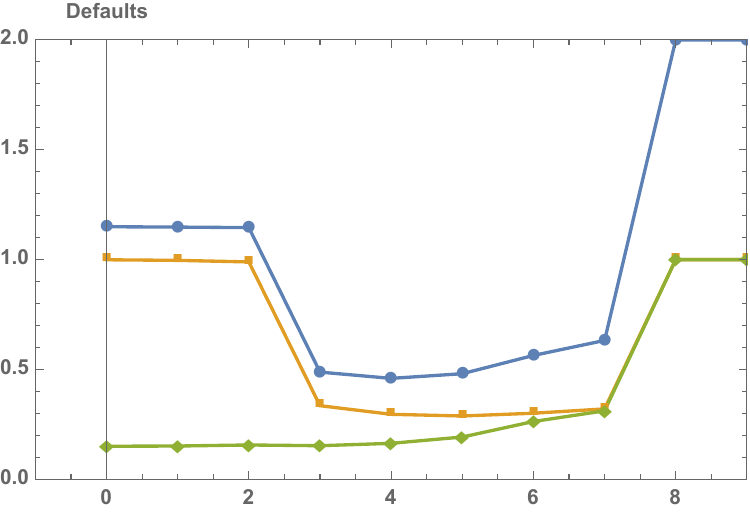}
\hspace{20pt}
\end{center}
\caption{Graph with defaults of: subsidiaries of type $1$ (orange squares),  subsidiaries of type $2$ (green rhombuses),  total defaults (blue circles), as a function of support level $x$.
The network parameters are: Holding of type $A$, $r_1=5,~r_2=9,~p_1=14,~p_2=14,~\epsilon_1=\epsilon_2=15\%$.
}
\label{fig:1} 
\end{figure}
For $-x$ ranging from $0$ to $9$ we determine the first joint zero $(\hat{z}_1, \hat{z}_2)$ of the functions $f_1^A$ and $f_2^A$ which gives us the final number of defaulted type $1$ and type $2$ subsidiaries according to Theorem~\ref{thm:threshold:model}. The results are displayed in Figure~\ref{fig:1}. This is a typical example where the number of defaults in both subsidiaries first decreases, reaches a minimum and then increases as $(-x)$ increases. While at first, this may seem counter intuitive, as more support, should mean less defaults, recall that if a holding supports its distressed subsidiary (in case of type $A$), then the holdings other subsidiary takes a hit of up to $-x$. With this in mind it is not surprising that large $-x$ may cause more type $2$ subsidiaries to default.

For $x=0$ the systems are decoupled. We observe that in this case, as a result of strong contagion, almost all type $1$ subsidiaries default. This is due to the relatively low capital in this network and the relatively strong connectivity. The contagion is very much driven by higher order effects. This can be seen by calculating the probability that a subsidiary that has not initially defaulted, defaults as a result of edges to initially defaulted subsidiaries. This probability is equal to the probability that a Poisson distributed random variable with parameter $14\cdot 0.15$ is at least $5$, and calculates as $\approx 6.21\%$ which thus accounts only for a very low share of the final defaults. Subsidiary network $2$ is different. Due to the larger capital $r_2=9$ almost no contagion takes place. This is not surprising because the probability that a subsidiary, which has not defaulted initially, defaults as the result of edges from initially defaulted subsidiaries, is very low. It is equal to  the probability that a Poisson distributed random variable with parameter $14\cdot 0.15$ is larger than $9$. This probability is less than $0.04\%$.

With support level $x<0$, the contagion in the network of subsidiaries of type $1$ can be significantly contained. In fact, we see that with a support level of $x=-3$, the fraction of defaulted type $1$ subsidiaries is reduced to $\approx 33.5 \%$ without significantly increasing the number of defaulted type $2$ subsidiaries. As a result also the total number of defaulted subsidiaries (type $1$ together with type $2$) is reduced. The lowest fraction of all the defaults is obtained for $x=-5$. With greater support, however, the subsidiaries of the type $2$ get weakened substantially up to the point where more of them default. These defaults then channel back into the network of subsidiaries of type $1$ and so forth. For $x=-8$ this leads to almost all subsidiaries defaulting.  

In comparison we now consider a network in which we change the connection probability $p_2$ to $4$ while keeping all other parameters unchanged. The results are shown in Figure~\ref{fig:2}. Due to the lower overall connectivity in the  network of subsidiaries of type $2$, the weakening of the type $2$ subsidiaries, as a result of their support for the type $1$ counterparts, has much less effect, because defaults do not amplify. In particular we do not see the feedback effects in between the two layers that we saw in the previous example. While for strong support ($x=-8$ and $x=-9$), the fraction of defaulted type $2$ subsidiaries has increased from $0.150$ to $0.209$ and $0.278$, this increase does not trigger new default rounds in the network of subsidiaries of type $1$. Observe that for both examples for $x=-9$, the number of defaulted type $1$ and type $2$ subsidiaries equal. In fact, because a subsidiary supports the holdings other subsidiary up to the point where it defaults itself, for $x=-9$, the subsidiaries always default together. 
\begin{figure}[ht!]
\begin{center}
\includegraphics[scale=0.6]{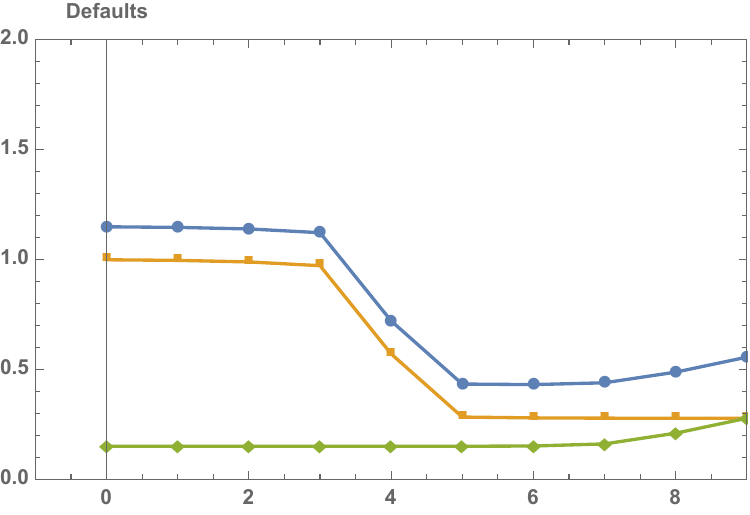}
\hspace{20pt}
\end{center}
\caption{Graph with defaults of: subsidiaries of type $1$ (orange squares),  subsidiaries of type $2$ (green rhombuses),  total defaults (blue circles), as a function of support level $x$.
The network parameters are: Holding of type $A$, $r_1=5,~r_2=9,~p_1=14,~p_2=4,~\epsilon_1=\epsilon_2=15\%$.
}
\label{fig:2} 
\end{figure}

Clearly a decrease in the initial shock $\epsilon_1$ decreases the number of subsidiaries that default (Figure \ref{fig4}). It also should be noted that at a low level of support, and with a low degree of connectivity there is very little influence on the defaults in the more capitalized subnetwork as the holding support increases.

\begin{figure}[ht!]
\begin{center}
\includegraphics[scale=0.9]{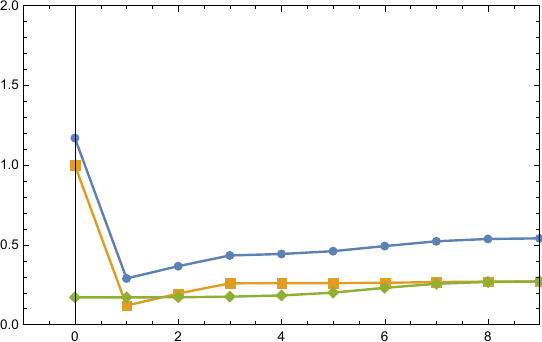}
\hspace{20pt}
\end{center}
\caption{Graph with defaults of: subsidiaries of type $1$ (orange squares),  subsidiaries of type $2$ (green rhombuses),  total defaults (blue circles), as a function of support level $x$.
The network parameters are: Holding of type $A$, $r_1=5,~r_2=9,~p_1=14,~p_2=14,~\epsilon_1=10\%,~\epsilon_2=15\%$.
}
\label{fig4} 
\end{figure}

We also illustrate that this holding structure can be a stabilizing force which ensures resilience in the sense of Section~\ref{resillience}. We consider the case from Example \ref{ex:resil}, with a holding with two (identical) subsidiaries $r_1=1=r_2,~p_1=1.5=p_2,$  and with a small initial shock: $\epsilon_1=1\%=\epsilon_2.$ Without the holding with the two networks of subsidiaries separated, there is a cluster of defaults (approximately $0.5936$), as a result of the propagation of the initial (small) shock. Whereas, once the holding is added, and the support is non-zero ($x=-1$), the defaults decrease to $0.01051$.

We also investigate the optimal distribution of wealth between the two subsidiaries assuming the wealth of the holding is fixed. Using the same parameters as before, namely $p_1=14,~p_2=4$, and 
$\varepsilon_1=\varepsilon_2=0.15$. Assume also that the wealth of each holding is $14$. The results are illustrated in Figure \ref{fig5}. Not surprisingly, we see that the minimum of total defaults is achieved when the distribution of wealth is close to equal, specifically, $r_1=8, ~r_2=6.$ Intuitively, the reason for this inequality is that subsidiaries of type $1$ are more connected, so should be capitalized a little more than subsidiaries of type $2$ so as to minimize the number of defaults in both networks of subsidiaries. 

\begin{figure}[ht!]
\begin{center}
\includegraphics[scale=0.6]{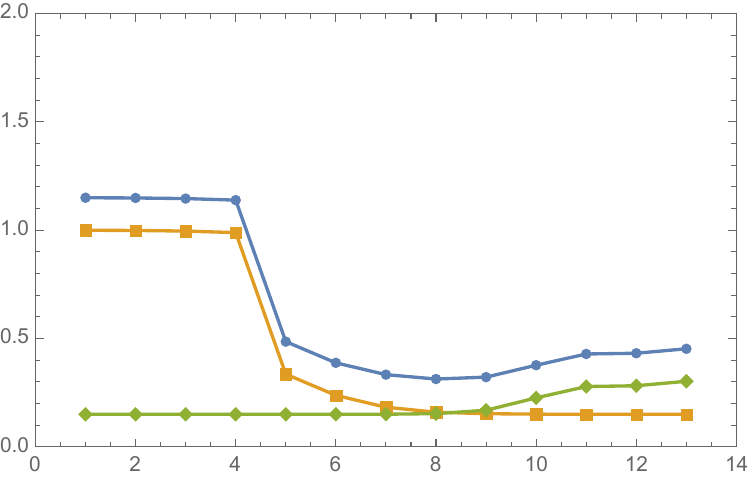}
\hspace{20pt}
\end{center}
\caption{Graph with defaults of: subsidiaries of type $1$ (orange squares),  subsidiaries of type $2$ (green rhombuses),  total defaults (blue circles) as a function of the capital $r_1$ of the subsidiaries of type $1$.
The network parameters are: Holding of type $A$, $r_2=14-r1,~x=-3,~p_1=14,~p_2=4,~\epsilon_1=\epsilon_2=15\%$.
}
\label{fig5} 
\end{figure}
As mentioned above, our setting is very homogeneous. It turns out that (at least in some cases), if banks deviate from the homogeneous support level $x$, this makes the system worse for everybody. Consider the case when $C_1=5,~C_2=9,~\epsilon_1=10\%,~\epsilon_2=15\%, ~p_1=p_2=14.$ The results are shown in Figure \ref{fig6.l}. In this case, the optimal support level is $x=-1$, as it minimizes  the number of defaulted subsidiaries of both types ($11.61\%$ of subsidiaries of type $1$ and  $15.02\%$ of type $2$), and therefore the total number of defaulted subsidiaries. However, even when only a small percentage of holdings deviates from this convention, this results in a very different outcome. We assume that $1\%$ of the holdings chose to not support their subsidiaries at all, i.e. they set their support level to zero. The remaining $99\%$ of the holdings cooperate, and support their subsidiaries up to a level of $x$. Figure \ref{fig6.r} illustrates the results. Not surprisingly for $x=-1$, the number of defaults increases compared to the situation where all holdings support up to $x=-1$. The optimal level of support for the $99\%$ of holdings increases now to $x=-2$. However, even at that level, $15.59\% $ subsidiaries of type $1$ and $15.06\%$ subsidiaries of type $2$ still default. This is a strictly worse outcome, when comparing to the original case when all the holdings support their subsidiaries to the level of $x=-1.$

\begin{figure}
\centering
\includegraphics[scale=0.9]{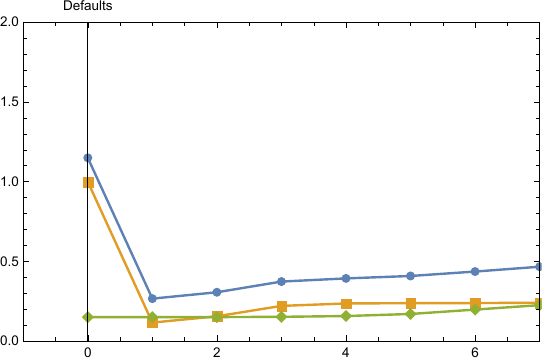}
\caption{Graph with defaults of: subsidiaries of type $1$ (orange squares),  subsidiaries of type $2$ (green rhombuses),  total defaults (blue circles), as a function of support level $x$.
The network parameters are: Holding of type $A$, $r_1=5,~r_2=9,~p_1=14,~p_2=14,~\epsilon_1=10\%,~\epsilon_2=15\%$, and all holdings support their subsidiaries at the same level of support $x$.
}
\label{fig6.l}
\end{figure}
\begin{figure}
\centering
\includegraphics[scale=0.9]{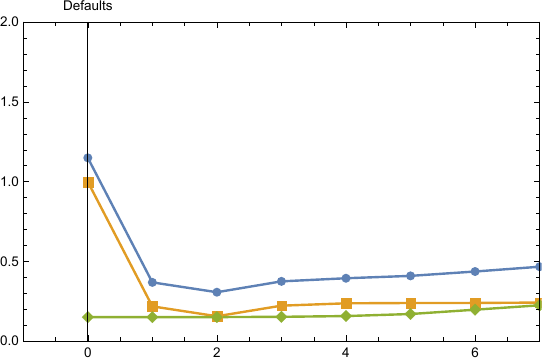}
\caption{Graph with defaults of: subsidiaries of type $1$ (orange squares),  subsidiaries of type $2$ (green rhombuses),  total defaults (blue circles), as a function of support level $x$.
The network parameters are: Holding of type $A$, $r_1=5,~r_2=9,~p_1=14,~p_2=14,~\epsilon_1=10\%,~\epsilon_2=15\%$, and $99\%$ of holdings support their subsidiaries at the same level of support $x$, whereas the other $1\%$ of holdings offer no support at all.
}\label{fig6.r}
\end{figure}

As a last example we consider again our original setting with $r_1=5$, $r_2=9$, $p_1=p_2=14$ and $\epsilon_1=\epsilon_2=15\%$, but now with a holding structure $B$. For this holding structure the support for one subsidiary has no negative effect on the other subsidiary of the same holding. As we mentioned previously, the support can be thought of as a loan that is taken against the positive capital of the holdings solvent subsidiary. The external lender is thus faced with the risk of both subsidiaries defaulting at the end. This external lender is not part of the network. As expected, we therefore see a significant reduction of the number of defaulted subsidiaries with stronger holding support and the maximal holding support $x=-9$ leads to the fewest defaults. This is illustrated in Figure \ref{fig3}.

\begin{figure}[ht!]
\begin{center}
\includegraphics[scale=0.6]{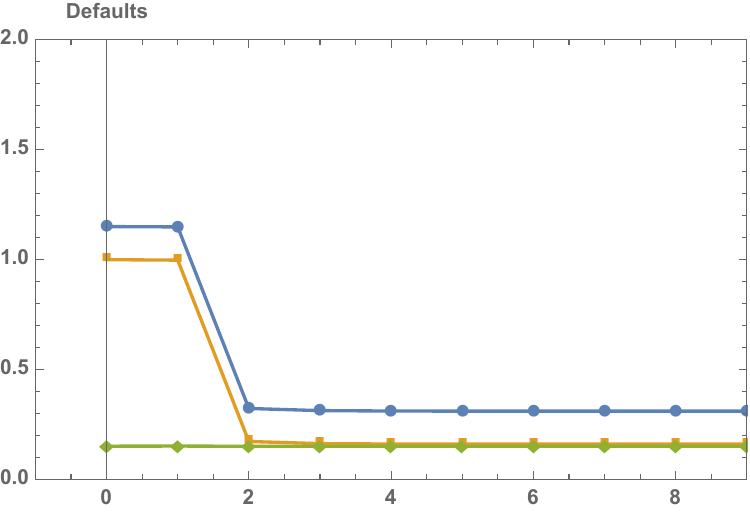}
\hspace{20pt}
\end{center}
\caption{Graph with defaults of: subsidiaries of type $1$ (orange squares),  subsidiaries of type $2$ (green rhombuses),  total defaults (blue circles), as a function of support level $x$.
The network parameters are: Holding of type $B$, $r_1=5,~r_2=9,~p_1=14,~p_2=14,~\epsilon_1=\epsilon_2=15\%$.
}
\label{fig3} 
\end{figure}

\section{Conclusion}\label{conclusion}
We proposed a simple model for a financial network of bank holdings with subsidiaries. This model allows us to study the role that the holding structure and the holding's support for troubled subsidiaries has on the propagation of contagion in a financial system. We observe that when (financial) firms are grouped into holdings this can have both, positive and negative effects. The support of a holding for a subsidiary in distress can mitigate or defer defaults and thus increase financial stability. However, in low capitalized systems, it can also amplify contagion as distress can reach previously healthy parts of the system. We show this phenomena in two ways: with a numerical case study that uses our theoretical results about the final default fraction, and analytically, based on a notion of resilience which allows us to investigate the spread of small initial shocks. We also determine the optimal support level from a regulator perspective.
We stress that this is a simple model, which allows for many interesting generalizations as discussed at the end of Section~\ref{model}, including more subsidiaries, more heterogeneous degrees, or holdings that have additional  capital.

\section{Proofs}\label{sec:proofs}
All the proofs in this section are generic for $J \in \{A, B\}$ therefore we continue to omit $J$ from the notation.
\begin{proof}[Proof of Proposition~\ref{properties:f}]
Continuity follows by continuity of the functions $\psi_k (\lambda)$ and $\Psi_k (\lambda)$ and dominated convergence. 

For part 1. recall that $f_{l}(0,0)= \sum_{(j,k) \in \mathbf{D}_{J,l}} a_{j,k}(0,0)$ with $a_{j,k}(0,0)\geq 0$. Since $\psi_{i,j}(0,0)=0$ if $i$ or $j$ is larger than $0$, it follows that $a_{j,k}(0,0)=\E [  \1_{\{ (C_1,C_2) = (j,k) \}}]=\P ((C_1,C_2) = (j,k))$ and thus $f_{l}(0,0) = \sum_{(j,k) \in \mathbf{D}_{J,l}} \P ((C_1,C_2) = (j,k))=\P((C_1,C_2)\in \mathbf{D}_{J,l})$.

Part 2. follows from \eqref{derivative}, and the fact that $a_{j,k}\geq 0$.
\end{proof}
\begin{proof}[Proof of Theorem~\ref{thm:threshold:model}]
First observe that by Proposition~\ref{properties:f} the function $f_1$ is increasing in its second argument while $f_2$ is increasing in its first argument. Moreover, $f_1(z_1,z_2)\leq 0$ for $z_1\geq 1$ and $f_2(z_1,z_2)\leq 0$ for $z_2\geq 1$ since $\sum_{(j,k) \in \mathbf{D}_{J,l}} a_{j,k}(z_1,z_2) \leq 1$. These properties allow to apply the proof of Lemma 3.2. in \cite{detering2020financial} without any changes and we can conclude that a component-wise smallest joint root $\hat{z}=(\hat{z}_1,\hat{z}_2)$ exists. 

As mentioned above, instead of directly considering the contagion process based on rounds as described in (\ref{process:generations}), we explore the set $\mathcal{S}_{i,2n-1}$ by sequentially quantifying the effect of a defaulted subsidiary on the rest of the system. So, in each step $t\in \mathbb{N}$ we only expose the effect of one (yet unexposed) defaulted subsidiary. Such a sequential reformulation is standard in the literature and was first used for exploration of the giant component (see \cite{Hofstad2014} for instance) and later used for bootstrap percolation and financial contagion (see \cite{janson2012,Cont2016,detering2019managing}). The situation here, however, is conceptually more complicated because of the complexity of the contagion process and we therefore describe it in some detail. First note that whenever a subsidiary reaches the capital $x$ we know for sure that it has defaulted, independent of the second subsidiary's financial situation. Therefore, there is no need to reduce the capital of such a subsidiary further in case it receives additional links from defaulted subsidiaries. This implies that each holding only takes on a finite number of {\em relevant} states. 

During the exploration process we keep track of several quantities:
\begin{itemize}
\item The sets $\tilde{C}_{j,k}(t)$ for $(j,k) \in \mathbf{D}$ consisting of those holdings $i\in [n]$ with capital $c_{i,1} $ of subsidiary $1$ being equal to $j$ and capital $c_{i,2}$ of subsidiary $2$ being equal to $k$ at step $t\in\mathbb{N}$. At the beginning of the process ($t=0$) we set $\tilde{C}_{j,k}(0):=\{ i \in [n] | (c_{i,1},c_{i,2})=(j,k) \}$ for all $j,k\in \mathbf{D}$. 
\item The sets $U_l(t), l\in \{1,2 \}$ of unexplored, defaulted type $l$ subsidiaries. At the beginning of the process we have $U_l(0)\subset [n_l]$ defined by: $i_l \in U_l(0)$ if and only if $(c_{i,1},c_{i,2})\in \mathbf{D}_{J,l}$. So, $U_l(0)$ contains those subsidiaries of type $l\in \{1,2\}$ which are in default at the beginning of the process.
\item The number $n_1(t)$ and $n_2 (t)$ of already explored type $1$ and type $2$ subsidiaries. We set $n_1(0)=n_2(0)=0$.
\end{itemize}

We now describe one step $t\in \mathbb{N}$ of this exploration process: At the very beginning of the step we choose uniformly a defaulted subsidiary from $U_1(t) \cup U_2(t)$. We denote this subsidiary by $S$. We then draw the links from $S$ to all the holdings in the sets $\tilde{C}_{j,k}(t)$. We then update these sets as follows. If in step $t$ we selected subsidiary $l\in \{1,2 \}$ of holding $i \in \tilde{C}_{j,k}(t)$ and it receives a link from $S$, then $i$ will be moved into $\tilde{C}_{j-1,k}(t+1)$ unless $j = x$ in which case it will just remain in its current set ($i \in \tilde{C}_{j,k}(t+1)$).
During this exploration, some subsidiaries might move into their default region. If this happens and subsidiary $1$ or $2$ defaults, we generate a {\em copy} of it and place it into the set $U_1(t)$, respectively $U_2(t)$. At the end of the step, if the selected subsidiary was in $U_1(t)$ we set $n_1(t) = n_1(t-1)+1$ and $n_2(t) = n_2(t-1)$, and we remove the subsidiary from the set of defaulted unexplored subsidiaries (i.e. we set $U_1(t+1)=U_1(t)\setminus \{ S \}$. If the selected subsidiary was from $U_2(t)$, then we set $n_2(t) = n_2(t-1)+1$ and $n_1(t) = n_1(t-1)$ and remove the subsidiary ($U_2(t+1)=U_2(t)\setminus \{ S \} $). Note that elements in $U_1(t)$ and $U_2(t)$ are distinguishable. The process ends at $\hat{t}=\min \{t \in \mathbb{N} \vert U_1(t) \cup U_2(t)=\emptyset \}$ and $\hat{t}$ corresponds to the total number of defaulted subsidiaries. Moreover, $\hat{t}= n_1(\hat{t}) + n_2 (\hat{t})$, because in each step we have exactly explored one subsidiary. Note that the procedure is such that the capital of some subsidiaries is still being updated although they have already defaulted (and possibly have been explored) but this updating has no effect. This happens if a subsidiary defaults with more capital than $x$.

We start with proving the first statement of the theorem, the lower bound for $n^{-1}\mathcal{S}^J_{i,2n-1}$ based on this sequential exploration.
Let $u_1(t)=\abs{U_1(t)}$ and $u_2(t)=\abs{U_2(t)}$ and $\tilde{c}_{j,k}(t)=\abs{\tilde{C}_{j,k}(t)}$ and let $h(t)$ describe the entire system, i.e. $$h(t)=(u_1(t), u_2(t),n_1 (t), n_2(t), \{\tilde{c}_{j,k} \}_{(j,k) \in \mathbf{D}} ).$$ 
We would like to stress that the the vector $h(t)=(u_1(t), u_2(t),n_1 (t), n_2(t), \{\tilde{c}_{j,k} \}_{(j,k) \in \mathbf{D}^x} )$ is a random vector and that we have chosen the lower case letters in order to distinguish from the random sets $U_1(t), U_2(t)$, and the $\tilde{C}_{j,k}(t)$ for $(j,k) \in \mathbf{D}$.

With probability $u_1 (t)/(u_1(t) + u_2 (t))$ in step $t$ subsidiary $1$ is selected that leads to a reduction of the number of vertices in $U_1(t)$ by $1$. On the other hand, each subsidiary in $\d \mathbf{D}_{J^{1 1}}$ receives an edge from the selected subsidiary with probability $\frac{p_1}{n}$ and thus defaults and is added to $U_1(t+1)$. Similar, if subsidiary $2$ is selected, which happens with probability $u_2 (t)/(u_1(t) + u_2 (t))$, then each subsidiary in $\d \mathbf{D}_{J^{12}}$ receives an edge from this subsidiary with probability $\frac{p_2}{n}$, defaults and is added to $U_1(t+1)$. These considerations lead to 
\begin{eqnarray}
&& \E [ u_1(t+1) -u_1 (t) \vert h(t) ] \label{eq:u-J-l}\\
&=& \frac{u_1 (t) }{u_1(t) + u_2 (t)} \left( -1 + \sum_{(j,k) \in \d \mathbf{D}_{J^{11}}}  \frac{p_1}{n} \tilde{c}_{j,k}(t) \right)
 + \frac{u_{2} (t) }{u_1(t) + u_2 (t)} \left( \sum_{(j,k) \in \d \mathbf{D}_{J^{12}} } \frac{p_{2}}{n} \tilde{c}_{j,k}(t) \right) \\
 &=& g_1 ( u_1(t)/n,u_2(t)/n,  \{ \tilde{c}_{j,k}(t)/n   \}_{(j,k) \in \mathbf{D}}  )
\end{eqnarray}
for $t\leq \hat{t} $ with 
\begin{align}
g_1(\nu_1,\nu_2, \{ \gamma_{l,m}   \}_{m,n} ):= \frac{\nu_1 }{\nu_1 + \nu_2} \left( -1 + \sum_{(j,k) \in \d \mathbf{D}_{J^{11}}}  p_1  \gamma_{j,k} \right) + \frac{\nu_2 }{\nu_1 + \nu_2}     \left( \sum_{(j,k) \in \d \mathbf{D}_{J^{12}} } p_{2} \gamma_{j,k} \right)
\label{eq:f1}
\end{align}
and similarly for $u_2(t)$. 
Moreover, a similar combinatorial argument yields that 
\begin{align}
&\E [ \tilde{c}_{k,j}(t+1) -\tilde{c}_{k,j}(t) \vert h(t) ] \\
&\qquad= \frac{u_1 (t) }{u_1(t) + u_2 (t)} \frac{p_1}{n} \left( \tilde{c}_{k+1, j} (t)\1_{\{ k+1\leq R\}}   - \tilde{c}_{k, j} (t) \1_{\{ k-1\geq x\}}  \right) \nonumber \\
&\qquad\qquad+ \frac{u_2 (t) }{u_1(t) + u_2 (t)} \frac{p_2}{n} \left( \tilde{c}_{k, j+1} (t)\1_{\{ j+1\leq R\}}  - \tilde{c}_{k, j} (t) \1_{\{ j-1\geq x\}}  \right) \nonumber \\
&\qquad= \frac{u_1 (t)/n }{u_1(t)/n + u_2 (t)/n} p_1 \left( (\tilde{c}_{k+1, j} (t)/n)\1_{\{ k+1\leq R\}}   - (\tilde{c}_{k, j} (t)/n)  \1_{\{ k-1\geq x\}}  \right) \nonumber \\
&\qquad\qquad+ \frac{u_2 (t)/n }{u_1(t)/n + u_2 (t)/n} p_2  \left( (\tilde{c}_{k, j+1} (t)/n)\1_{\{ j+1\leq R\}}  - (\tilde{c}_{k, j} (t)/n) \1_{\{ j-1\geq x\}}  \right) \nonumber \\
&\qquad= g_{k,j} (u_1 (t)/n, u_2 (t)/n, \{ \tilde{c}_{l,m}(t)/n   \}_{m,n}  )\nonumber
\end{align}
with
\begin{eqnarray}
&& g_{k,j}(\nu_1,\nu_2, \{ \gamma_{l,m}   \}_{m,n} ):= \frac{\nu_1 }{\nu_1 + \nu_2} p_1 \left( \gamma_{k+1, j}\1_{\{ k+1\leq R\}}   - \gamma_{k, j} \1_{\{ k-1\geq x\}} \right)\nonumber\\
& +& \frac{\nu_2 }{\nu_1 + \nu_2} p_2  \left( \gamma_{k, j+1}\1_{\{ j+1\leq R\}}  - \gamma_{k, j} \1_{\{ j-1\geq x\}}  \right)\nonumber
\end{eqnarray}
for $(j,k) \in \mathbf{D}$. 
And finally 
\begin{eqnarray}
\E [ n_l (t+1) -n_l (t) \vert h(t) ] &=& \frac{u_l (t) }{u_1(t) + u_2 (t)}=q_l (u_1(t)/n,u_2 (t)/n)
\end{eqnarray}
with $q_l (\nu_1,\nu_2) = \frac{\nu_l  }{\nu_1+ \nu_2 }$ for $l\in \{ 1,2 \}$.

While the above calculated differences are random and very much depend on the degree of the chosen subsidiary in each step, it turns out that the sum of a few successive steps becomes concentrated at the mean and can be described by a vector valued continuous function. For this consider therefore the vector valued function $(\nu_1 , \nu_2, \{ \gamma_{k,j} \}_{(j,k) \in \mathbf{D}},\beta_1 ,\beta_2)$ defined by the system of differential equations:
\begin{eqnarray}\label{ODE1}
\frac{\dd \nu_l (\tau)}{\dd \tau} &=& g_l(\nu_1(\tau),\nu_2 (\tau), \{ \gamma_{j,k} (\tau)  \}_{(j,k) \in \mathbf{D}}), ~ \nu_l(0) = \P ((C_1,C_2)\in \mathbf{D}_{J,l}),~l\in\{1,2\}, \label{DFG:nu:1}
\end{eqnarray}
\begin{eqnarray}\label{ODE2}
\!\!\!\!\!\!\!\!\!\frac{\dd \gamma_{k,j} (\tau)}{\dd \tau} &=& g_{k,j}(\nu_1(\tau),\nu_2 (\tau), \{ \gamma_{j,k} (\tau)  \}_{(j,k) \in \mathbf{D}}  ),~\gamma_{k,j}(0)=\P (C_1 =k,C_2=j ), ~(k,j) \in \mathbf{D}, ~~\label{DFG:gamma:1}
\end{eqnarray}
\begin{eqnarray}\label{ODE2:beta}
\frac{\dd \beta_{l} (\tau)}{\dd \tau} &=& q_l (\nu_1(\tau),\nu_2 (\tau)),~\beta_l(0)=0,  l\in\{1,2 \}.\label{DFG:beta:1}
\end{eqnarray}
Now note that the functions $g_1, g_2 , \{ g_{k,j} \}_{(j,k) \in \mathbf{D}^x}, q_1,q_2$ fulfill a Lipchitz condition on the domain
\begin{eqnarray}\label{convergence:area}
D_{\delta_0,\delta_1}
	=
	\Big\{
		(\tau ,\nu_1, \nu_2,\{ \gamma_{j,k} \}_{(j,k) \in \mathbf{D}}) \in \mathbb{R}^{5+\abs{\mathbf{D}} }
		&\mid & -\delta_0 < \tau < 2,\nonumber \\ 
		&& -\delta_0 < \gamma_{j,k} < 2,  \nonumber \\ 
		&&  -\delta_0 < \nu_1, \nu_2 <2,  \nonumber \\
		&& \delta_1 < \nu_1 + \nu_2  \Big\}\label{convergenceset},
\end{eqnarray}
for any $\delta_0,\delta_1 >0$ as can easily be seen by calculating the partial derivatives and observing that they are bounded. This implies existence of a solution to the system (\ref{ODE1}) and (\ref{ODE2}). Since the solution exists on $D_{\delta_0,\delta_1}$ for any positive $\delta_1$, we can of course extend the solution to $D_{\delta_0,0}$.

The next step of the proof is to solve the system described by (\ref{DFG:nu:1}) and (\ref{DFG:gamma:1}). In order to do so, define the following implicitly given functions
\begin{eqnarray}
h_1 (\tau ) = \frac{\nu_1(\tau)}{\nu_1 (\tau) +\nu_2 (\tau) }, \;\;\; h_2 (\tau ) = \frac{\nu_2 (\tau)}{\nu_1 (\tau) +\nu_2 (\tau) } \\
\beta_1(\tau)= \int_0^{\tau} h_1(s) ds, \;\;\; \beta_2(\tau)= \int_0^{\tau} h_2(s) ds,
\end{eqnarray}
defined on $D_{\delta_0,0}$.
Note that $h_1 (\tau ) + h_2 (\tau )=1$ and thus $\beta_1(\tau) + \beta_2(\tau) =\tau$. 
The equation for $\gamma_{R,R} (\tau)$ is a simple linear equation solved by
\begin{align}
\gamma_{R,R} (\tau) = \gamma_{R,R} (0) e^{-p_1 H_1 (\tau) } e^{-p_2 H_2 (\tau) }  =\gamma_{R,R} (0)  \psi_0 (p_1 H_1 (\tau)  ) \psi_0 (p_2 H_2 (\tau)  ) .
\end{align}
Note that $\gamma_{R,R} (\tau)$ is a constant term multiplied by the probability that two independent Poisson distributed random variables with parameters  $p_1 H_1 (\tau)$ and $p_2 H_2 (\tau)$ are zero. This allows to guess that for $(j,k) \in \mathbf{D}$,
\begin{align}
\gamma_{j,k} (\tau) =& \sum_{\substack{j\leq l_1 \leq R \\k \leq l_2\leq R}} \gamma_{l_1,l_2} (0) \left( \psi_{l_1-j} (p_1 H_1 (\tau))\1_{\{ j>x \}}  +  \Psi_{l_1-j} (p_1 H_1 (\tau))\1_{\{ j=x\}} \right) \cdot \nonumber \\
\cdot &\left( \psi_{l_2-k} (p_2 H_2 (\tau)) \1_{\{k>x \}} + \Psi_{l_2-k} (p_2 H_2 (\tau)) \1_{\{k=x \}}  \right). \label{sol:gamma}
\end{align}
To see that above expression does in fact satisfy (\ref{ODE2}), first note that
$$\frac{\partial}{\partial \lambda}\psi_k ( \lambda  )=-e^{-\lambda} \frac{\lambda^k}{k !}+ e^{-\lambda} \frac{\lambda^{k-1}}{(k-1) !}\1_{\{k\geq 1 \}}  = \psi_{k-1} ( \lambda  )\1_{\{k\geq 1 \}}-\psi_k ( \lambda  )$$
and
$$\frac{\partial}{\partial \lambda}\Psi_k ( \lambda  ) = e^{-\lambda } \frac{\lambda^{k-1} }{k-1!} =\psi_{k-1} (\lambda)\1_{\{k\geq 1 \}}.$$ 
Then for $j,k \notin \{x,R \} $ we get from (\ref{sol:gamma}) that

\begin{align}
&\frac{\partial}{\partial \tau} \gamma_{j,k} (\tau) \\
=& \sum_{\substack{j\leq l_1 \leq R \\k \leq l_2\leq R}} \gamma_{l_1,l_2} (0) p_1 h_1 (\tau) (\psi_{l_1-j-1} (p_1 H_1 (\tau))\1_{\{l_1-j\geq 1 \}} -  \psi_{l_1-j} (p_1 H_1 (\tau))) \psi_{l_2-k} (p_2 H_2 (\tau)) \nonumber \\
+& \sum_{\substack{j\leq l_1 \leq R \\k \leq l_2\leq R}} \gamma_{l_1,l_2} (0) p_2 h_2 (\tau)\psi_{l_1-j} (p_1 H_1 (\tau)) (\psi_{l_2-k-1} (p_2 H_2 (\tau))\1_{\{l_2-k\geq 1 \}} -  \psi_{l_2-k} (p_2 H_2 (\tau)))  \nonumber \\
=&p_1 h_1 (\tau) \sum_{\substack{j+1\leq l_1 \leq R \\k \leq l_2\leq R}} \gamma_{l_1,l_2} (0)  (\psi_{l_1-(j+1)} (p_1 H_1 (\tau))  \psi_{l_2-k} (p_2 H_2 (\tau)) \nonumber \\  
 -&p_1 h_1 (\tau) \sum_{\substack{j\leq l_1 \leq R \\k \leq l_2\leq R}} \gamma_{l_1,l_2} (0)   \psi_{l_1-j} (p_1 H_1 (\tau)) \psi_{l_2-k} (p_2 H_2 (\tau)) \nonumber \\
 +&p_2 h_2 (\tau) \sum_{\substack{j\leq l_1 \leq R \\k+1 \leq l_2\leq R}} \gamma_{l_1,l_2} (0)  (\psi_{l_1-j} (p_1 H_1 (\tau))  \psi_{l_2-(k+1)} (p_2 H_2 (\tau)) \nonumber \\ 
  -&p_2 h_2 (\tau) \sum_{\substack{j\leq l_1 \leq R \\k \leq l_2\leq R}} \gamma_{l_1,l_2} (0)   \psi_{l_1-j} (p_1 H_1 (\tau)) \psi_{l_2-k} (p_2 H_2 (\tau)) \nonumber \\
  =& p_1 h_1 (\tau)  \gamma_{j+1,k} (\tau) - p_1 h_1 (\tau) \gamma_{j,k} (\tau)  + p_2 h_2 (\tau) \gamma_{j,k+1} (\tau) - p_2 h_2 (\tau)  \gamma_{j,k} (\tau),
\end{align}
which verifies (\ref{ODE2}). The case where $j$ or $k$ is in $\{x,R \}$ can be verified similarly and is left to the reader. 

In order to derive $\nu_1(\tau)$ and $\nu_2(\tau)$, recall that
$$\nu_l(0) = \P ((C_1,C_2)\in \mathbf{D}_{J,l})=\sum_{(r_1,r_2)\in \mathbf{D}^x_{J,l}} \!  \! \! \! \!  \! \! \! \gamma_{r_1,r_2}(0),~l\in \{ 1,2 \}.$$
We then make the following Ansatz for the solution:
\begin{equation}\label{sol:nu}
\nu_l(\tau) =\sum_{(j,k) \in \D_{J,l}} \gamma_{j,k} ( \tau ) -H_1(\tau) .
\end{equation}
Note that the $\nu_l$ are well defined on $D_{\delta_0,0}$ as the sum is only over finitely many terms. 
A tedious, but straightforward calculation of taking derivatives of \eqref{sol:nu}, verifies that 
$\nu^J_l$ satisfies \eqref{DFG:nu:1} as most of the terms in the sum cancel out and only the terms on the boundaries $\d \mathbf{D}_{J^{ml}}, l,m \in \{1,2 \}$ remain.

For $(k,j) \in \D$ we now want to approximate the quantities $\tilde{c}_{j,k} (t)$ with $ \gamma_{j,k} (t/n)$, and for $i \in \{ 1,2\}$ the quantities $u_i (t)$ with $\nu_i (t/n)$ respectively. We use \cite{10.1214/aoap/1177004612}[Theorem 2] to show that in fact
\begin{eqnarray}
u_l (t)&=&  \nu_{l} (t/n) n + o(n), l\in \{ 1,2\} ,\label{approx:u} \\
\tilde{c}_{j,k} (t) &= &  \gamma_{j,k} (t/n) n + o(n), (k,j) \in \D ,\label{approx:c} \\
n_l (t)&=& \beta_l (t/n) n + o(n), l\in \{ 1,2\} ,\label{approx:n}
\end{eqnarray}
with high probability. 
By above considerations, assumptions (ii) and (iii) of \cite{10.1214/aoap/1177004612}[Theorem 2] are satisfied. To see that also assumption (i) is satisfied, note that in each step of the exploration, the number of vertices receiving an edge from the currently explored defaulted subsidiary is bounded by the degree of this subsidiary. Let $D^+$ be the largest out-degree in the graph. A simple probabilistic bound for $D^+$ is given by 
\[ 
\mathbb{P} (D^+ \geq \log (n) ) \leq n \binom{n-1}{\log (n)} (\max \{p_1,p_2 \} /n)^{\log (n)} \leq n (\max \{p_1,p_2 \} )^{\log(n)}/(\log (n) !).
\]
By Stirling's formula, it follows that $\log(n) ! \geq n^m (\max \{p_1, p_2 \} )^{\log(n)}$ for any $m$ and thus
\[
\mathbb{P} (D^+ \geq \log (n) ) \leq n^{-(m-1)}.
\]
We may choose $\lambda (n) =n^{1/10}$ and $\omega (n) = n^{5/10}$ for condition (i) in Wormald's Theorem. This implies that the approximation (\ref{approx:u})-(\ref{approx:n})
holds uniformly for $$(t/n ,\nu_1 (t/n), \nu_2 (t/n),\{ \gamma_{j,k} (t/n) \}_{(j,k) \in \mathbf{D}}, \beta_1(t/n), \beta_2 (t/n)) \in D_{\delta_0,\delta_1}$$ with high probability.

Let now 
$\tau_{\delta_1}:= \sup \{ \tau \in \mathbb{R}_+ : (\tau ,\nu_1, \nu_2,\{ \gamma_{j,k} \}_{(j,k) \in \mathbf{D}}) \in D_{\delta_0,\delta_1} \}$. By the definition of the system $(u_1(t), u_2(t),\{ \tilde{c}_{k,j}(t) \}_{(j,k) \in \mathbf{D}} )$ it is clear that all quantities are in between $0$ and $2n$ and thus by (\ref{approx:u})-(\ref{approx:n}) and the definition of $D_{\delta_0,\delta_1}$ it follows that $\nu_1(\tau_{\delta_1})+ \nu_2(\tau_{\delta_1})= \delta_1$.

Let us now set $\hat{\tau}:=\lim\limits_{\delta_1 \rightarrow 0} \tau_{\delta_1}$, where the limit clearly exists as $\tau_{\delta_1}$ is increasing in $\delta_1$ and bounded by $2$. Moreover, $\lim\limits_{\delta_1 \rightarrow 0}\nu_1(\tau_{\delta_1})=\lim\limits_{\delta_1 \rightarrow 0}\nu_2(\tau_{\delta_1})= 0$ and we shall thus define $\nu_1(\hat{\tau})=\nu_2(\hat{\tau})=0$.

Note that $f_l (\beta_1(\tau) , \beta_2(\tau)) = \nu_l (\tau ) $ for $\tau\leq \hat{\tau}$. Moreover, since $(z_1,z_2)$ is the smallest (component-wise) joint zero of $f_1$ and $f_2$, and due to the monotonicity and continuity of $\beta_1$ and $\beta_2$, and of $f_l$ as derived in Proposition~\ref{properties:f}, it follows that $(\beta_1(\hat{\tau}) , \beta_2(\hat{\tau}))=(\hat{z}_1, \hat{z}_2)$. Let $\hat{t}:= \min \{ t \;\vert \; u_1(t)+ u_2(t) =0 \}$. By the previous considerations it follows that $\hat{t}/n\geq \tau_{\delta_1}$ w.h.p. for every $\delta_1$ and thus by (\ref{approx:u}) $n^{-1}\mathcal{S}_{i,2n-1} \geq \beta_i(\tau_{\delta_1})- \varepsilon
$ w.h.p. for any $\varepsilon >0$ and by $\lim\limits_{\delta_1\rightarrow 0} \beta_i(\tau_{\delta_1})=\hat{z}_i$ it follows that
$$
n^{-1}\mathcal{S}_{i,2n-1} \geq \hat{z}_i- \varepsilon
$$
with high probability. This proves the first part of the theorem.

Now we will show that under the additional assumption that there exists $\bf{v}= (v_1, v_2) \in \mathbb{R}^2_+$ such that 
$D_{\bf{v}} f_1 (\hat{z}), D_{\bf{v}} f_2 (\hat{z})  <0$, it holds in fact that 
\[ n^{-1}\mathcal{S}_{l,2n-1} \xrightarrow{p}  \hat{z}_i,\quad\text{as }n\to\infty. \] 

We know that $D_{\bf{v}} f_1 (\hat{z})=D_{\bf{v}} f_1 (\beta_1(\hat{\tau}) , \beta_2(\hat{\tau})), D_{\bf{v}} f_2 (\hat{z})= D_{\bf{v}} f_2 (\beta_1(\hat{\tau}) , \beta_2(\hat{\tau})) <0$ and by continuity of the derivative and because of
$$\lim_{\tau \rightarrow \hat{\tau}} (\beta_1(\tau) , \beta_2(\tau))=(\beta_1(\hat{\tau}) , \beta_2(\hat{\tau}))= (\hat{z}_1 ,\hat{z}_2)$$
it follows that also $D_{\bf{v}} f_1 (\beta_1(\tau_{\delta_1}) , \beta_2(\tau_{\delta_1})), D_{\bf{v}} f_2 (\beta_1(\tau_{\delta_1}) , \beta_2(\tau_{\delta_1}))  <0$ for $\delta_1$ small enough.
Since $a_{j,k} (\beta_1(\tau),\beta_2(\tau))=\gamma_{j,k} (\tau)$ it follows by (\ref{deriv:f}) that
\begin{eqnarray}
 &&  v_1 p_1 \sum_{(j,k) \in  \d \mathbf{D} _{J^{11}}  }  \gamma_{j,k}(\tau_{\delta_1}) + v_2 p_2 \sum_{(j,k) \in  \d \mathbf{D} _{J^{12}}  }  \gamma_{j,k}(\tau_{\delta_1}) <c_1 v_1 \label{stop:cond:1} \\
  &&   v_1 p_1 \sum_{(j,k) \in  \d \mathbf{D} _{J^{21}}  }  \gamma_{j,k}(\tau_{\delta_0}) + v_2 p_2 \sum_{(j,k) \in  \d \mathbf{D} _{J^{22}}  }  \gamma_{j,k}(\tau_{\delta_1}) <c_2 v_2 \label{stop:cond:2} 
\end{eqnarray}
for some $c_1,c_2 <1$ sufficiently close to $1$. We shall track the process until the step $\lfloor \tau_{\delta_1} n \rfloor$ by making use of (\ref{approx:u}) - (\ref{approx:n}), and then we explore the remaining subsidiaries in the sets $U_1(\lfloor \tau_{\delta_1} n \rfloor)\cup U_2(\lfloor \tau_{\delta_1} n \rfloor)$ in rounds. We place the holdings with capital $(j,k) \in \mathbf{D}$ in four sets which we define now and where a holding might be placed in more than one set. For this, let us call a subsidiary {\em weak} if it can default by either receiving one edge from a defaulted subsidiary itself or because the other subsidiary of the same holding receives a link from a defaulted subsidiary. We call a subsidiary {\em strong} if it can only default if this subsidiary and the other subsidiary of the same holding, receive in total at least two links from defaulted subsidiaries. For $l\in \{1,2 \}$ let $\W_l$ be the set that includes those holdings with weak type $l$ subsidiaries and $\St_l$ those with strong type $l$ subsidiaries. Note that $\W_l \cap \St_l =\emptyset$ for $l\in \{ 1,2\}$ but the other four possible intersections of these sets are not necessary empty. Further for each $l\in \{1,2 \}$ we define the subsets $\W^i_l$ of holdings whose weak type $l$ subsidiaries default in round $i$ of the exploration process and the subsets $\St^i_l$ of holdings whose strong type $l$ subsidiaries default in round $i$ of the exploration process.

Note that from (\ref{stop:cond:1}) and (\ref{stop:cond:2}) we get together with the approximation (\ref{approx:c}) that for $n$ large enough

\begin{eqnarray}
 &&  v_1 p_1 \sum_{(j,k) \in  \d \mathbf{D} _{J^{11}}  }  \frac{\tilde{c}_{j,k}( \lfloor \tau_{\delta_1}n \rfloor )}{n} + v_2 p_2 \sum_{(j,k) \in  \d \mathbf{D} _{J^{12}}  }   \frac{\tilde{c}_{j,k}( \lfloor \tau_{\delta_1}n \rfloor )}{n} <c_1 v_1  \label{v1Ineq}\\
  &&   v_1 p_1 \sum_{(j,k) \in  \d \mathbf{D} _{J^{21}}  }   \frac{\tilde{c}_{j,k}( \lfloor \tau_{\delta_1}n \rfloor )}{n} + v_2 p_2 \sum_{(j,k) \in  \d \mathbf{D} _{J^{22}}  }   \frac{\tilde{c}_{j,k}( \lfloor \tau_{\delta_1}n \rfloor )}{n} <c_2 v_2 \label{v2Ineq}
\end{eqnarray}
w.h.p..
This allows us now to bound the expected number of weak subsidiaries of type $l\in \{1,2 \}$ that default in the first round through a combinatorial counting argument by
\begin{eqnarray} 
&&n^{-1} \E\left[ \abs{\W^1_l} \Big\vert h( \lfloor \tau_{\delta_1} n \rfloor )\right] \nonumber \\ 
& \leq & \left( \frac{u_1( \lfloor \tau_{\delta_1}n \rfloor )}{n} p_1 \sum_{(j,k) \in  \d \mathbf{D} _{J^{l1}}}  \frac{\tilde{c}_{k,j}(\lfloor \tau_{\delta_1}n \rfloor)}{n}  + \frac{u_2( \lfloor \tau_{\delta_1}n \rfloor )}{n} p_2 \sum_{(j,k) \in  \d \mathbf{D} _{J^{l2}}}   \frac{\tilde{c}_{k,j}(\lfloor \tau_{\delta_1}n \rfloor)}{n} \right) \nonumber \\ \nonumber
& < &   c_l v_l \varepsilon 
\end{eqnarray}
and in particular
\begin{equation}
n^{-1} \sum_{l\in \{1,2 \}}\E\left[   \abs{\W^1_l} \Big\vert h( \lfloor \tau_{\delta_1} n \rfloor )\right]  <  \sum_{l\in \{1,2 \}}  c_l v_l \varepsilon 
\end{equation}
for any $\varepsilon$ provided that $\frac{u_1( \lfloor \tau_{\delta_1}n \rfloor )}{n}< v_1\varepsilon$ and $\frac{u_2( \lfloor \tau_{\delta_1}n \rfloor )}{n}< v_2\varepsilon$, which we can ensure by possibly decreasing $\delta_1$ further. We further choose $c\in (\max \{c_1,c_2 \},1)$ and obtain that  $$\sum_{l\in \{1,2 \}}  c_l v_l \varepsilon < c \varepsilon  (v_1+v_2). $$
Similarly we find for $l\in \{1,2 \}$ the very rough bound
\begin{eqnarray} 
&&n^{-1} \E\left[ \abs{\St^1_l} \Big\vert h( \lfloor \tau_{\delta_1} n \rfloor )\right] \nonumber \\ 
& = &\left( \frac{ u_1( \lfloor \tau_{\delta_1}n \rfloor ) + u_2( \lfloor \tau_{\delta_1}n \rfloor )  }{n}\right)^2 \max \{ p_1,p_2 \} \sum_{(j,k) \in  \mathbf{D}^c_{J,l}}  \frac{\tilde{c}_{k,j}(\lfloor \tau_{\delta_1}n \rfloor)}{n}  \\ \nonumber
 &<& ( ( v_1 + v_2 )\varepsilon)^2 \max \{ p_1,p_2 \}  < (c'/4) \varepsilon v_l
\end{eqnarray}
with $c'>0$ such that $2c' < 1- c$, provided that $\varepsilon $ is chosen small enough. In particular it then follows that 
\begin{equation}
n^{-1} \sum_{l\in \{1,2 \}} \E \left[  \W^{1}_l + \St^{1}_l  \right] < (c+c'/2) \varepsilon ( v_1 + v_2 ) .
 \end{equation}
Our plan is now to show that for $l \in \{1,2 \}$ it holds for every $m>0$ that
\begin{eqnarray} 
\E \left [n^{-1} \W^{m}_l \right] < (c_l + c'/4)(c + c')^{(m-1)}  \varepsilon v_l \label{boundW} \\
 \E \left[n^{-1} \St^{m}_l \right] <  (c'/4) (c + c')^{(m-1)}  \varepsilon  v_l.  \label{boundS}
\end{eqnarray} 
This implies 
\begin{equation}\label{boundWS}
 n^{-1} \sum_{l\in \{1,2 \}}\E \left[   \W^{m}_l + \St^{m}_l   \right] < (c +c')^m  \varepsilon ( v_1 + v_2 ),
\end{equation}
and from this it follows that 
\begin{equation}
 n^{-1} \sum_{m=1}^{2n} \sum_{l\in \{1,2 \}}\E \left[  \W^{m}_l + \St^{m}_l   \right] < \frac{1}{1-(c+c')} \varepsilon ( v_1 + v_2 ).
 \end{equation}
 and therefore 
 \begin{equation}\label{MI}
 \P \left(  n^{-1} \sum_{m=1}^{2n} \sum_{l\in \{1,2 \}} ( \W^{m}_l + \St^{m}_l ) \geq \sqrt{\frac{1}{1-(c+c')} \varepsilon ( v_1 + v_2 )} \right)\leq \sqrt{\frac{1}{1-(c+c')} \varepsilon ( v_1 + v_2 )}.
  \end{equation}
The quantity on the right hand side of (\ref{MI}) can be made arbitrarily small by decreasing $\varepsilon$. So in order to conclude, it remains to show (\ref{boundW}) and (\ref{boundS}).

For $m=1$ this is done above. Suppose that (\ref{boundW}) and (\ref{boundS}) holds for $m>0$. Then it follows for $l\in \{1,2 \}$ by (\ref{v1Ineq}) and (\ref{v2Ineq}) that

\begin{eqnarray}
& & n^{-1} \E\left[ \abs{\W^{m+1}_l} \Big\vert h( \lfloor \tau_{\delta_1} n \rfloor )\right] \\
&\leq & \frac{\sum_{x \in \W_1 \cup \St_1 } \P (x \in \W_1^m \cup \St_1^m )}{n} p_1 \sum_{(j,k) \in  \d \mathbf{D} _{J^{l1}}}  \frac{\tilde{c}_{k,j}(\lfloor \tau_{\delta_1}n \rfloor)}{n} \\
& + &  \frac{\sum_{x \in \W_2 \cup \St_2}  \P (x \in \W_2^m \cup \St_2^m ) }{n} p_2 \sum_{(j,k) \in  \d \mathbf{D} _{J^{l2}}}   \frac{\tilde{c}_{k,j}(\lfloor \tau_{\delta_1}n \rfloor)}{n} \nonumber \\
& \leq &  (\E [n^{-1} \W^{m}_1 ] + \E [n^{-1} \St^{m}_1 ] )  p_1 \sum_{(j,k) \in  \d \mathbf{D} _{J^{l1}}}  \frac{\tilde{c}_{k,j}(\lfloor \tau_{\delta_1}n \rfloor)}{n} \\
& + & (\E [n^{-1} \W^{m}_2 ] + \E [n^{-1} \St^{m}_2 ] )  p_2 \sum_{(j,k) \in  \d \mathbf{D} _{J^{l2}}}   \frac{\tilde{c}_{k,j}(\lfloor \tau_{\delta_1}n \rfloor)}{n} \nonumber \\
&<& (c_l + c'/2) (c + c' )^m \varepsilon v_l
\end{eqnarray}
and in particular 
$$n^{-1} \E[ \sum_{l\in \{1,2 \}} \abs{\W^{m+1}_l} \vert h( \lfloor \tau_{\delta_1} n \rfloor )] < (c + c'/2 ) (c + c' )^m \varepsilon (v_1+v_2) $$

For a subsidiary to be in $\St^{m+1}_l$ it must receive a link from the set $\W^{m}_l \cup \St^{m}_l$ and one from $\bigcup_{k=1}^m \W^{k}_l \cup \St^{k}_l$. It follows from a very rough bound that 
\begin{eqnarray}
& & n^{-1} \E[ \abs{\St^{m+1}_l} \vert h( \lfloor \tau_{\delta_1} n \rfloor )] \nonumber \\
&\leq & \frac{\sum_{x \in \W_1 \cup \St_1 \cup \W_2 \cup \St_2 } \P (x \in \W_1^m \cup \St_1^m \cup \W_2^m \cup \St_2^m )}{n} \max \{ p_1,p_2 \} \cdot \nonumber \\
& \cdot & \frac{\sum_{x \in  \W_1 \cup \St_1 \cup \W_2 \cup \St_2 } \P (x \in \bigcup_{k=1}^{m} \W_1^m \cup \St_1^m \cup \W_2^m \cup \St_2^m )}{n} \max \{ p_1,p_2 \} \nonumber \\
&<& (c + c'/2)^2 (c + c' )^m 1/(1- c + c' ) \varepsilon^2 (v_1 + v_2)^2 \nonumber \\
& < & c'/4 (c + c' )^m \varepsilon (v_1 + v_2)
\end{eqnarray}
for $\varepsilon$ sufficiently small. This finishes the proof.
\end{proof}
\begin{proof}[Proof of Proposition~\ref{representationf:P}] 
First note that the graph that describes the bound of the default region for subsidiary $1$ is a non-decreasing function of the capital of subsidiary $2$. Similarly the graph that describes the bound of the default region of subsidiary $2$ is a non-decreasing function of the capital of subsidiary $1$ (see Figure \ref{fig1}). We rewrite (\ref{eq:f-a}): 
 \begin{eqnarray}
 f_l(C_1,C_2;z_1,z_2)&=&\sum_{(j,k) \in \mathbf{D}_{J,l}} a_{j,k}(z_1,z_2)  -z_l\nonumber \\
 &=&\sum_{(j,k) \in \mathbf{D}_{J,l}} \E [  \1_{\{ (C_1,C_2) \geq (j,k) \}} \psi_{C_1-j,C_2-k} (p_1 z_1,p_2 z_2) ] -z_l \nonumber \\
  &=&\sum_{(j,k) \in \mathbf{D}_{J,l}} \sum_{(n,h)\geq (j,k)} \E [  \1_{\{ (C_1,C_2) = (n,h) \}} \psi_{n-j,h-k} (p_1 z_1,p_2 z_2) ]-z_l\nonumber \\
  &=& \sum_{(n,h)} \sum_{\substack{(j,k) \in \mathbf{D}_{J,l}\\(j,k) \leq (n,h)}} \E [  \1_{\{ (C_1,C_2) = (n,h) \}} \psi_{n-j,h-k} (p_1 z_1,p_2 z_2) ]-z_l\label{lastline}.
  \end{eqnarray}
  For fixed $(n,h)$ because of $\sum_{k,l\geq 0}\psi_{k,l} (p_1 z_1,p_2 z_2)=1$, it follows that $$\sum_{(j,k)\leq (n,h)}\E [  \1_{\{ (C_1,C_2) = (n,h) \}} \psi_{n-j,h-k} (p_1 z_1,p_2 z_2) ]=\P((C_1,C_2) = (n,h)).$$

  The shape of the default region is such that if $(n,h) \in \mathbf{D}_{J,l}$ and $(j,k)\leq  (n,h)$, then also $(j,k) \in \mathbf{D}_{J,l} $. This implies that in the inner sum in (\ref{lastline}) for $(n,h)\in \mathbf{D}_{J,l}$ we can actually drop the condition $(j,k)\in \mathbf{D}_{J,l}$. Thus we can rewrite the double sum in (\ref{lastline}) as \begin{equation}
  \P((C_1,C_2) \in \mathbf{D}_{J,l} ) + \sum_{(n,h)\in \mathbf{D}^c_{J,l}} \sum_{\substack{(j,k) \in \mathbf{D}_{J,l}\\(j,k) \leq (n,h)}} \E [  \1_{\{ (C_1,C_2) = (n,h) \}} \psi_{n-j,h-k} (p_1 z_1,p_2 z_2) ].\label{poisson:2}
  \end{equation}
  Now observe that $\E [  \1_{\{ (C_1,C_2) = (n,h) \}} \psi_{n-j,h-k} (p_1 z_1,p_2 z_2) ]/\P((C_1,C_2) = (n,h))$ is the conditional expectation of the random variable $\psi_{C_1-j,C_2-k} (p_1 z_1,p_2 z_2)$, given that $(C_1,C_2) = (n,h)$ and $$(1/\P((C_1,C_2) = (n,h))) \cdot \sum_{\substack{(j,k) \in \mathbf{D}_{J,l}\\(j,k) \leq (n,h)}} \E [  \1_{\{ (C_1,C_2) = (n,h) \}} \psi_{n-j,h-k} (p_1 z_1,p_2 z_2) ],$$ the conditional expectation of the random variable 
  
  \begin{equation}
      \Lambda^l_{C_1,C_2} (p_1 z_1, p_2 z_2 )=\sum_{\substack{(j,k) \in \mathbf{D}_{J,l}\\(j,k) \leq (n,h)}} \psi_{C_1-j,C_2-k} (p_1 z_1,p_2 z_2) ,\label{poisson:D}
        \end{equation}
        given that $(C_1,C_2) = (n,h)$. By the definition of $\psi$, this is just the probability that $(n,h)+(X_1(p_1 z_1),X_2(p_2 z_2))$ is in the default region where $X_1$ and $X_2$ being independent Poisson distributed random variables with parameter $p_1 z_1$ and $p_2 z_2$ respectively. Summing over $(n,h)$, as we do in (\ref{poisson:2}), now gives us by the law of total expectation the unconditional probability of $\Lambda_{C_1,C_2} (p_1 z_1, p_2 z_2 )$ which shows (\ref{representation:f}).
        By the law of total probability it follows that 
   \begin{equation*}
f_l(C_1,C_2;z_1,z_2)= \P \left( (C_1,C_2)+(X_1(p_1 z_1),X_2(p_2 z_2)) \in \mathbf{D}_{J,l}\right),
   \end{equation*}
    with $X_1(p_1 z_1)$ and $X_2(p_2 z_2)$ being mutually independent Poisson distributed random variables with parameter $p_1 z_1$ and $p_2 z_2$ independent of $(C_1,C_2)$.
\end{proof}
\begin{proof}[Proof of Corollary~\ref{corollary:f:cont:in:C}]
We prove the result for $f_l$, the proof for $D_{\bm{v}} f_l$ is very similar. For fixed $(z_1,z_2)\in \mathbb{R}_+\times \mathbb{R}_+$, point-wise convergence follows directly from Proposition~\ref{representationf:P} since for every $(z_1,z_2)$, the integrand $\Lambda_{C_1,C_2} (z_1 p_1 , z_2 p_2 )$ in (\ref{representation:f}) is bounded by $1$. By weak convergence of $(C_1^n,C_2^n)$ it then follows that $f_l(C^n_1,C^n_2;z_1,z_2)$ converges to $f_l(C_1,C_2;z_1,z_2)$. Let now $A\subset \mathbb{R}_+ \times \mathbb{R}_+$ compact. For simplicity we assume that $A$ is an interval of the form $A=[a,b]$ with $a,b \in \mathbb{R}_+ \times \mathbb{R}_+$ but the proof for general compact sets does not pose any further complications. For a given $\delta$, choose now $K\in \mathbb{N}$ and points $x^1=(x_1^1,x_2^1),\dots , x^K=(x^K,x_2^K) \in A$ such that for every $x=(x_1,x_2)\in A$ there exists $0\leq k\leq K$, such that $\abs{x-x_k}\leq \delta/2$. Now choose $n$ large enough such that $\abs{f_l(C^n_1,C^n_2;x_1^k,x_2^k)-f_l(C_1,C_2;x_1^k,x_2^k)}\leq \delta/2 (1+\max \{p_1,p_2 \})$ for all $0\leq k\leq K$. By (\ref{derivative}), the partial derivatives of $f_l$ are bounded by $1+\max \{p_1,p_2 \} $. Therefore, for arbitrary $x$, choose $x_k$ such that $\abs{x-x_k}\leq \delta/2$. By the triangular inequality 
\begin{eqnarray}
&& \abs{f_l(C^n_1,C^n_2;x_1,x_2) - f_l(C_1,C_2;x_1,x_2)} \nonumber \\
& = & \abs{f_l(C^n_1,C^n_2;x_1,x_2) - f_l(C^n_1,C^n_2;x^k_1,x^k_2) + f_l(C^n_1,C^n_2;x^k_1,x^k_2) - f_l(C_1,C_2;x_1,x_2)}\nonumber \\
 &\leq & \abs{f_l(C^n_1,C^n_2;x_1,x_2) - f_l(C^n_1,C^n_2;x^k_1,x^k_2)} + \abs{f_l(C^n_1,C^n_2;x^k_1,x^k_2) - f_l(C_1,C_2;x_1,x_2)}\nonumber \\
 & \leq & \delta, \nonumber
\end{eqnarray}
which shows uniform convergence. 
\end{proof}
\begin{proof}[Proof of Corollary~\ref{monotonicity:f}]
The first statement follows directly from the representation (\ref{representation:f:P}) since by the shape of the default region for $(j,k) < (n,h)$, it holds that $$\Lambda^l_{j,k} (p_1 z_1, p_2 z_2 )\geq  \Lambda^l_{n,h} (p_1 z_1, p_2 z_2 ).$$ 
To prove 2. observe that for $(j,k) < (n,h)$, it actually holds that $$\Lambda^l_{j,k} (p_1 z_1, p_2 z_2 )> \Lambda^l_{n,h} (p_1 z_1, p_2 z_2 )$$ for $(z_1,z_2)\neq (0,0)$ unless $(j,k), (n,h)\in \mathbf{D}_{J,l}$. The condition $\P ((\tilde{C}_1,\tilde{C}_2) \leq (j,k))< \P ((C_1,C_2) \leq (j,k))$ for some $(j,k)\in \mathbf{D}^c_{J,l}\cup \mathbf{B}_{J,l}$ ensures then exactly that $\P (\Lambda_{\tilde{C_1},\tilde{C_1}} (p_1 z_1, p_2 z_2 ) \leq (j,k))< \P (\Lambda_{C_1,C_1} (p_1 z_1, p_2 z_2 ) \leq (j,k))$ and by (\ref{representation:f}) the claim follows. 
\end{proof}
\begin{proof}[Proof Theorem~\ref{resilience}]
We start with the proof for 1. Let $\tilde{\varepsilon}>0$ be given. We know by Corollary~\ref{corollary:f:cont:in:C} that if $(\tilde{C}_1,\tilde{C}_2)$ and $(C_1,C_2)$ are close (in distribution), then the functionals and their derivatives are close. Now, if there exists a direction $\bf{v} \in \mathbb{R}^2_{+}$ such that $D_{\bf{v}} f_l (C_1,C_2;\bm{0}) <0$, then for $\varepsilon$ small enough, we know that also $D_{\bf{v}} f_l (\tilde{C}_1,\tilde{C}_2;\bm{0}) =: \kappa <0$ for $\P ((\tilde{C}_1,\tilde{C}_2) \neq(C_1,C_2))\leq \varepsilon$. In fact, it follows by continuity that $D_{\bf{v}} f_l (\tilde{C}_1,\tilde{C}_2; \bm{z}) < \kappa/2 $ for $\bm{z}$ in some neighborhood $B_{\delta}(\bm{0})$ of $\bm{0}$. By possibly decreasing $\varepsilon$ even further to the point that $f_1 (\tilde{C}_1,\tilde{C}_2;\bm{0}),f_2 (\tilde{C}_1,\tilde{C}_2;\bm{0}) \leq -\tilde{\varepsilon} \kappa /4$ and ensuring that $\varepsilon/2 \leq \delta$, we get that $ f_l (\tilde{C}_1,\tilde{C}_2;(\varepsilon/2) {\bf{v}}) <0$ for $l \in \{1,2 \}$. By Lemma~\ref{upper:bound} below, it follows that the first joint zero $\hat{\bm{z}}$ of $f_1(\tilde{C}_1,\tilde{C}_2;\cdot)$ and $f_2(\tilde{C}_1,\tilde{C}_2;\cdot)$ is bounded by $(\varepsilon/2) {\bf{v}}$. By (\ref{directional:der}) we may assume that ${\bf{v}}$ is such that $v_1+v_2=1$ where ${\bf{v}}=(v_1,v_2)$. It then follows by Theorem~\ref{thm:threshold:model}  that $n^{-1}(\mathcal{S}_{1,2n-1} + \mathcal{S}_{2,2n-1})\leq \varepsilon$.

To show part 2. about non-resilience: Again we start with an uninfected network $(C_1,C_2)$, which implies that $f_1 (C_1,C_2;\bm{0})=f_2 (C_1,C_2;\bm{0})=0$. By assumption, $$\min \{ D_{\bf{v}} f_1 (C_1,C_2;{\bf{0}}), D_{\bf{v}} f_2 (C_1,C_2;{\bf{0}}) \}=:\kappa > 0,$$ and by continuity on the directional derivative in the direction $\bf{v}$, there exists some neighborhood $B_{\delta}(\bm{0})$ of $\bm{0}$ such that 
$$\min \{ D_{\bf{v}} f_1 (C_1,C_2;{\bm{z}}), D_{\bf{v}} f_2 (C_1,C_2;{\bm{z}}) \}=:\kappa/2 > 0,$$ for $\bf{z} \in B_{\delta}(\bm{0})$. This implies that $f_1 (C_1,C_2;t{\bf{v}}) ,f_2 (C_1,C_2;t{\bf{v}})>0$ for $t\leq \delta .$ 

We show that this in fact implies that $\max \{ f_1 (C_1,C_2;{\bf{z}}) ,f_2 (C_1,C_2;{\bf{z}}) \}>0 $ for any $\bm{z} \in (\bf{0}, \delta \bf{v}]$. Let now $\tilde{\bm{z}} \in (\bf{0}, \delta \bf{v}]$ and assume without loss of generality that $\tilde{\bm{z}}$ is to the right of the line $\delta \bf{v}$ in $\mathbb{R}^2$. Let $t_{\tilde{\bm{z}}}\bf{v}$ the unique point on the line with the same first coordinate as the point $\tilde{\bf{z}}$, i.e. $t_{\tilde{\bm{z}}} v_1 =z_1$ where ${\bf{v}}=(v_1, v_2) $. We then know that $f_1 (C_1,C_2;t_{\tilde{{\bm{z}}}}{\bf{v}}) ,f_2 (C_1,C_2;t_{\tilde{\bm{z}}}{\bf{v}})>0$. However, by monotonicity of $f_2 (C_1,C_2;{\bm{z}})$ with respect to $z_1$, it follows that $f_2 (C_1,C_2;{\tilde{\bm{z}}})\geq f_2 (C_1,C_2;t_{\tilde{\bm{z}}}{\bf{v}})>0$. If instead $\tilde{\bm{z}}$ was on the left of the line $\delta \bf{v}$ the same argument applied to the second coordinate and the function $f_1$ would show that $f_1 (C_1,C_2;{\tilde{\bm{z}}})>0$.
 Because $\tilde{\bm{z}} \in (\bf{0}, \delta \bf{v}]$ was chosen arbitrary, we conclude that $\max \{ f_1 (C_1,C_2;{\bf{z}}) ,f_2 (C_1,C_2;{\bf{z}}) \}>0 $ for any $\bm{z} \in (\bf{0}, \delta \bf{v}]$.

Now consider the infected network parameterized by $(\tilde{C}_1,\tilde{C}_2)$ with $\P ((\tilde{C}_1,\tilde{C}_2) \leq (C_1,C_2))=1$ and $\P (\tilde{C}_1 ,\tilde{C}_2)\in \mathbf{D}_{J,1} \cup \mathbf{D}_{J,2} )>0$. It follows by Corollary~\ref{monotonicity:f} item 2. that $$f_l(C_1,C_2;z_1,z_2) < f_l(\tilde{C}_1,\tilde{C}_2;z_1,z_2)$$ for all $z_1,z_2\geq 0$ and $l \in \{ 1,2 \}$.
By above consideration we know that no point $\bm{z} \in (\bf{0}, \delta \bf{v}]$ exists such that both $f_1 (C_1,C_2;\cdot)$ and $f_2 (C_1,C_2;\cdot ) $ are $\leq 0$. This implies that there exists also no point $\bm{z} \in (\bf{0}, \delta \bf{v}]$ such that both $f_1 (\tilde{C}_1,\tilde{C}_2;\cdot)$ and $f_2 (\tilde{C}_1,\tilde{C}_2;\cdot ) $ are $\leq 0$. In particular because of $\max \{ f_1 (\tilde{C}_1,\tilde{C}_2;{\bf{0}}), f_2 (\tilde{C}_1,\tilde{C}_2;{\bf{0}} ) \}>0$, there can not be any joint zero of $f_1 (\tilde{C}_1,\tilde{C}_2;\cdot)$ and $f_2 (\tilde{C}_1,\tilde{C}_2;\cdot ) $ in $[\bf{0}, \delta \bf{v}]$. Again, by Theorem~\ref{thm:threshold:model} part 1. it follows that $n^{-1}(\mathcal{S}_{1,2n-1} + \mathcal{S}_{2,2n-1})\geq \delta v_1 + \delta v_2 $, independent of the specification of $(\tilde{C}_1,\tilde{C}_2)$.
\end{proof}
\begin{proof}[Proof Lemma~\ref{upper:bound}]
First note that $f_1(0,z_2)\geq 0$ for any $z_2\geq 0$ and $f_2(0,z_2)\geq 0$ for any $z_1\geq 0$. We now construct a decreasing sequence of points $\{\bm{z}_n=(z_{1,n},z_{2,n})\}_{n\in \mathbb{N}}$ with $\bm{z}_0=\tilde{\bm{z}}$ and such that $\lim_{n\rightarrow \infty}\bm{z}_n $ exists and is a joint zero of $f_1$ and $f_2$. 

For this, set $\bm{z}_0=\tilde{\bm{z}}$, and then for every $n>0$ odd we define $\bm{z}_n=(z_{1,n},z_{2,n})$ by $z_{2,n}=z_{2,n-1}$ and 
\begin{equation}\label{sup:1}
z_{1,n}=\sup \{t\leq z_{1,n-1} | f_1 (t, z_{2,n-1} )=0\}
\end{equation}
and for even $n$ we define $\bm{z}_n=(z_{1,n},z_{2,n})$ by $z_{1,n}=z_{1,n-1}$ and
\begin{equation}\label{sup:2}
z_{2,n}=\sup \{t\leq z_{2,n-1} | f_2 (z_{1,n-1},t )=0\}.
\end{equation}
We first need to show that the above sets are not empty and that $(z_{1,n}, z_{2,n})$ is actually a well defined point in $[0,\infty) \times [0, \infty )$ for every $n$. By assumption, we have that $f_1(\tilde{\bm{z}}) , f_2(\tilde{\bm{z}})\leq 0$. For the set defined in (\ref{sup:1}) as long as $f_1 (\bm{z}_{n-1})\leq 0$, it follows by continuity of $f_1$ and by $f_1(0,z_2)\geq 0$ that the set is not empty and $z_{1.n} \in [0,\infty) $. Moreover, by definition of $z_{1,n}$, it holds that $f_1(\tilde{\bm{z}}_n)=0$ and by monotonicity of $f_2$ in its first argument it holds that $f_2(\tilde{\bm{z}}_n) \leq 0$ as long as $f_2(\tilde{\bm{z}}_{n-1}) \leq 0$. The same argument applies to (\ref{sup:2}). 

Clearly, the sequence $\bm{z}_n$ is decreasing and bounded from below by $\bm{0}$. This implies that the sequence converges to some point $\overline{\bm{z}} \leq \tilde{\bm{z}}$. Moreover, because $f_1 (\bm{z}_n)=0$ whenever $n$ is odd and $f_2 (\bm{z}_n)=0$ whenever $n$ is even, it follows by continuity of $f$ that $f_1 (\overline{\bm{z}}) = f_2 (\overline{\bm{z}}) =0$ and thus $\overline{\bm{z}}$ is a joint zero. Note that the sequence $\tilde{\bm{z}}_{n}$ is defined to approximate the joint zero $\overline{\bm{z}}$ from a region where both function values are $\leq 0$.
\end{proof}
\bibliographystyle{apalike}
\small{\bibliography{bibtex2}}

\end{document}